\newcommand{\listalgolistingname}{List of Algorithms}
\newcommand{\algolisting}[2]{
\refstepcounter{algolisting}
\caption{\label{#1}#2}
\addcontentsline{algs}{algolisting}{\thealgolisting \hspace{0.2cm} #2}
}
\title{Specification of State and Time Constraints for Runtime Verification of Functions\\
\vspace{30px}\Large Technical Report}
\author[1, 2, $\dag$]{Joshua Dawes}
\author[1]{Giles Reger}
\affil[1]{University of Manchester, Manchester, UK}
\affil[2]{CERN, Geneva, Switzerland}
\affil[$\dag$]{\url{joshua.dawes@cern.ch}}
\newtheorem{definition}{Definition}[section]
\newtheorem{remark}{Remark}[section]
\newtheorem{example}{Example}[section]
\newtheorem{prop}{Proposition}[section]
\numberwithin{equation}{section}
\numberwithin{figure}{section}
\begin{document}

\maketitle
\tableofcontents
\newpage

\chapter{Motivation}\label{chapter-motivation}

This report presents the foundations for a body of work in Runtime Verification\cite{bartocci18}.  The work presented is therefore concerned with monitoring a program with respect to some specification.  The contributions include i) a new static model of programs that preserves reachability information, ii) a new logic, Control Flow Temporal Logic (CFTL), along with a CFTL-specific instrumentation technique and iii) an efficient monitoring algorithm.  CFTL is characterised by its tight coupling with the control flow of programs being verified, leading to a departure from the conventionally high level of abstraction in specification languages.

The material described in this report is the basis of work that will be performed at CERN, monitoring infrastructure of the Compact Muon Solenoid (CMS) Experiment, and will be described in a future paper.  CERN, the European Organisation for Nuclear Research, is a Particle Physics research laboratory in Geneva, Switzerland.  Part of CERN's accelerator complex is the Large Hadron Collider (LHC), a circular proton-proton collider whose design energy is 14 TeV.

During LHC runs, collisions/events result in unprecedented amounts of data, meaning understanding of the performance of any system that deals with the data must be precise.  With upgrades of the LHC planned that will increase luminosity (and, in turn, the amount of data to be dealt with), precise understanding becomes more important and so does a scalable method of obtaining it.

The first application will be a group of web services used by CMS.  These web services make a good initial test case for this work in Runtime Verification; they are object-oriented, involve local computation, communicate with other machines over a network and, most significantly, are subject to state and time constraints.  In addition, they are written in Python (a language common inside CMS) which has powerful introspection features of which advantage is taken in this work.

The web services of interest are for management and upload to databases of the so-called non-Event (Conditions) data, namely alignment and calibrations constants describing the CMS detector. Understanding errors and drops in performance of the Conditions upload service involves understanding performance at the source code level which, in the context of the test case this report describes, also leads to understanding of the performance of the surrounding network (since other machines must be queried to obtain certain information).  At the source code level, we must understand the relationships between data generated, the control flow that generated it and the timings involved.  In fact, changes in timing are often indicative of network behaviour in a service that communicates with other machines regularly.

Services already exist for simple performance profiling in CMS, but these profile events such as function calls as being disjoint from anything else in the execution: one example is designed to provide a visualisation of performance, and is not a verification tool.  Therefore, this report lays the foundation for later stages of research, most of which will make use of CMS infrastructure and improve on the existing CMS tools.

\chapter{Control Flow Temporal Logic}\label{section-logic-outline}

This chapter introduces the new logic whose semantics, instrumentation and monitoring are the focus of this report.  The chapter will open in Section \ref{section-static-model} with the introduction of a new static representation of programs, a so-called \textit{Symbolic Control Flow Graph} (SCFG).  This abstract representation of programs will serve as both a basis on which to build the new logic's semantics, and an aid to the instrumentation method described later.  Section \ref{section-program-run-model} will describe how the abstraction in Section \ref{section-static-model} can be combined with a set of \textit{critical variables} (ultimately, an alphabet) to induce a model of program runs.  This model of program runs will be used to define the semantics of the new logic in Section \ref{section-logic-semantics}.

\section{A Static Model of Programs}\label{section-static-model}

Consider a function $f$ implemented in code.  Now, let $\text{Var}_f$ be the set of program variables (including functions defined/called) in the code for the function $f$.  Further, let a \textit{symbolic state}, a representation of program state without runtime information, be a total function $\sigma : \text{Var}_f \to \{\text{changed}, \text{unchanged}, \text{undefined}, \text{called}\}$.  Thus, a symbolic state $\sigma$ determines whether a program variable has been changed or whether a function has been called, but contains no information about the value since it cannot be assumed that this is computable statically.  Now the notion of a symbolic state has been introduced, the Symbolic Control Flow Graph of the code in the function $f$ is given in Definition \ref{def-scfg}.

\begin{definition}[Symbolic Control Flow Graph]\label{def-scfg}
A Symbolic Control Flow Graph of a function $f$ is a directed graph $SCFG_f = \langle V, E, v_s, V_e \rangle$ which allows cycles.  $V, E, v_s$ and $V_e$ are such that

\begin{itemize}

	\item $V$ is a finite sequence of symbolic states $\langle \sigma_1, \sigma_2, \dots, \sigma_n \rangle$ induced by state changing instructions in the code for $f$.
	
	Denote by $V(\sigma)$ the natural number $i$ such that $(V, i) = \sigma$ (where $(V, i)$ is the $i^{\text{th}}$ element of the vector $V$), ie, the index in $V$ at which $\sigma$ is found.  Also, denote by $|V|$ the length of the sequence $V$.
	
	\item $E \subset \mathbb{N}_{|V|} \times \Phi \times 2^{\{\text{call}, \text{assignment}, \text{control flow}\}} \times \mathbb{N}_{|V|}$ is a set of edges representing instructions that induce changes in state.  In particular, $\langle n, \phi, \text{type}, m\rangle \in E$ is an edge from $(V, n)$ to $(V, m)$ augmented with a branching condition $\phi$ (possibly empty) that asserts that $\phi$ holds between the symbolic states $(V, n)$ and $(V, m)$, and that the computation required to move from symbolic state $(V, n)$ to $(V, m)$ is a statement of type $t$ for every $t \in \text{type}$.
	
	\item $v_s \in \mathbb{N}_{|V|}$ is the position in $V$ of the symbolic state from which the control flow starts.
	
	
	\item $V_e \subset \mathbb{N}_{|V|}$ is the set of indices in $V$ of final symbolic states where, for every $i \in V_e$,  there is no $\langle i, \phi, t, m \rangle \in E$ such that $m \in V_e$ (if a symbolic state is final, it cannot be moved to another final symbolic state).
	
\end{itemize}
\end{definition}

Given an edge $\langle n, \phi, \text{type}, m \rangle \in E$, $t \in \text{type}$ is a \textit{type} of the edge.  Further, given $\text{SCFG}_f$, a path $\pi$ of length $l$ (write $|\pi| = l$) through $\text{SCFG}_f$ is defined by a finite sequence of symbolic states $\langle \sigma_1, \dots, \sigma_l \rangle$ where, for each $\sigma_i, \sigma_{i+1}$ ($1 \le i < l$), there is an edge $\langle V(\sigma_i), \phi, t, V(\sigma_{i+1}) \rangle \in E$.  A \textit{complete path} is a path $\pi = \langle \sigma_1, \dots, \sigma_l \rangle$ such that $V(\sigma_1) = v_s$ and $V(\sigma_l) \in V_e$.

\begin{remark}
$v_s$ is normally the empty symbolic state, denoted by $\sigma_\epsilon$.  For $\sigma_\epsilon$, for every $x \in \text{Var}_f$, $\sigma_\epsilon(x) = \text{undefined}$.
\end{remark}

\begin{remark}
A permuted form of $V$, given the appropriately modified edge set $E$, would define an isomorphic graph (the isomorphism would be a permutation of natural numbers\footnote{A permutation of a set is a bijective map from the set to itself.}); the use of a sequence for $V$ is simply to have a natural index to refer to each state and, more significantly, to allow multiple occurrences of the same symbolic state.
\end{remark}

\begin{remark}
Symbolic Control Flow Graphs are necessarily finite ($|V|, |E| < \infty$) because programs are finite representations of algorithms.
\end{remark}

Construction of a Symbolic Control Flow Graph is outlined in the form of a set of rules.  Suppose the current symbolic state is $\sigma$ (hence, a vertex in some SCFG).  Then:

\begin{itemize}
	\item Assignments \lstinline{x = a} for some program variable $x$ and some expression $a$ induce a new state $\sigma'$ with $\sigma' = \sigma[x \mapsto \text{changed}]$\footnote{$\sigma[x \mapsto \text{changed}]$ is the standard map modification notation, denoting the map that agrees with $\sigma$ on all program variables except $x$, whose value is now $\text{changed}$.}.  Assignments therefore also induce edges $\langle \sigma, \top, \text{assignment}, \sigma' \rangle$.
	
	\item Function calls \lstinline{g(...)} for some function $g$ (the internals of which are not assumed to be known and are currently of no interest) induce a new state $\sigma'$ with $\sigma' = \sigma[g \mapsto \text{called}, x_1 \mapsto \text{changed}, \dots, x_n \mapsto \text{changed}]$ for all $x_i \in \text{Var}_f$.  Notice the inclusion of all program variables being changed; this is because one cannot make any assumptions about the purity\footnote{A function is pure if no variables outside the local scope of that function are affected by its execution.} of the function $g$.  Function calls therefore also induce edges $\langle \sigma, \top, \text{call}, \sigma' \rangle$.
	
	\item Conditionals induce multiple new states $\sigma'_i$ for $k$ blocks by considering each block individually: the change in state induced by the first instruction in block $i$ induces the new state $\sigma'_i$.  Naturally, the block reached by the conditional's test condition $\phi_1$ induces an edge $\langle \sigma, \phi_1, t_1, \sigma'_1 \rangle$; the blocks reached by the alternative conditions $\phi_i$ for $1 < i < k$ induce edges
	
	$$\langle \sigma, \bigg(\bigwedge_{n=1}^{i-1} \lnot\phi_n \bigg) \land \phi_i, t_i, \sigma'_i \rangle,$$
	
	and the else-block (the $k^{\text{th}}$) induces an edge
	
	$$\langle \sigma, \bigwedge_{n=1}^{k-1} \lnot \phi_n, t_n, \sigma'_k \rangle.$$
	
	Each $t_i$ for $0 \le i \le n$ is the type of the statement that induces the symbolic state $\sigma'_i$.
	
	\item For and while-loops induce new states by applying the rules above to the code in their body, then taking the first and last symbolic states in the body ($\sigma_s$ and $\sigma_t$ respectively) and inducing edges, based on the loop condition $\phi_l$, $\langle \sigma, \phi_l, t, \sigma_s \rangle$ and $\langle \sigma_t, \top, \text{control flow}, \sigma \rangle$.  Here, $t$ is the type of the statement that induces the symbolic state $\sigma_s$.
\end{itemize}

In the case of the instruction \lstinline{x = e(g(...))} where $e$ denotes an arbitrary expression (hence, an assignment and a function call in one instruction), the rules can simply be combined, where all program variables must be considered and $g$ is considered as called.  Additionally, processing a conditional leads to multiple \textit{loose-end} states (one for each branch), so the next instruction (if there is one) must be processed with respect to each symbolic state that is still a loose-end.  Finally, for a symbolic state that indicates that the variable $x \in \text{Var}_f$ has changed/been called, the next symbolic state after that which indicates that a program variable has been changed/called that is not $x$ must map $x$ to \textit{unchanged}.

\begin{example}
Figure \ref{fig-scfg-example} shows an example SCFG computed from a simple program.  The program, in this case, has branching based on the condition \lstinline{i == j}; the edges labelled \lstinline{[...]} (shortened to save space) denote the conditions that must hold for the control flow to follow the respective paths.

	\begin{figure}[ht]
	\centering
	\begin{subfigure}{0.4\textwidth}
\begin{lstlisting}
a = 10
i = 10
j = 20
if i == j:
	c = 10
	f(0.1)
else:
	c = 20
	f(1.1)
\end{lstlisting}
	\caption{\label{subfig-scfg-example-code}An example code snippet.}
	\end{subfigure}%
	\begin{subfigure}{0.4\textwidth}
	\includegraphics[width=0.6\linewidth]{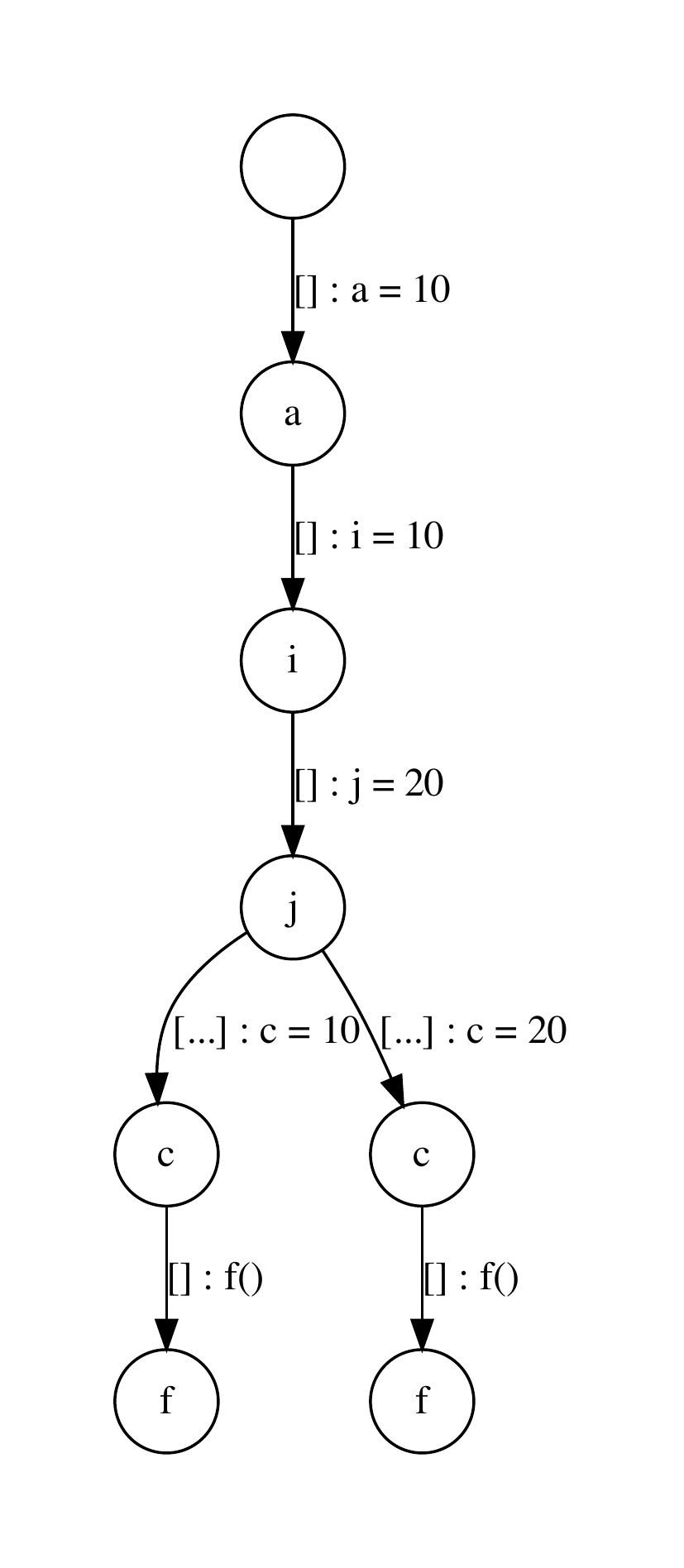}
	\caption{\label{subfig-scfg-example}The SCFG derived from the code snippet.}
	\end{subfigure}
	
	\caption{\label{fig-scfg-example}}
	\end{figure}
\end{example}

\subsection{Similarities with Symbolic Execution Trees}

The symbolic execution tree \cite{Baldoni, Kinga} used in the symbolic execution literature draws similarities with the SCFG approach in that it models a program without any runtime information.  A key difference is that symbolic execution trees associate both symbolic states \textit{and} the instructions that compute the symbolic states with vertices, using edges only for branching.  The SCFG described in Section \ref{section-static-model}, however, only associates symbolic states with vertices, and associates the instructions whose execution computes those symbolic states with the preceding edges (that is, the edges whose destination vertex is the symbolic state computed).

Additionally, despite not using any runtime information, symbolic execution trees encode information about runs of a program and so contain distinct branches that never converge after divergence has been forced by, say, a conditional.  In particular, paths through a symbolic execution tree may be uniquely identified by the sequence of conditions that are true along them.  The symbolic control flow graph, on the other hand, allows convergence since it can be regarded as an augmented control flow graph.

\section{A Model of Program Runs}\label{section-program-run-model}

In order to define a representation of a given program run based on a symbolic control flow graph (see Section \ref{section-static-model}), the set of \textit{critical symbols} is first defined as the set $C$ containing names of program variables and functions, hence $C \subseteq \text{Var}_f$, for the function $f$ implemented in code.  Further, a \textit{concrete state} is a total function

$$\tau : \text{Var}_f \to \text{Val}\cup\{\text{undefined}, \text{not called}\}$$

where $\text{Val}$ is the finite\footnote{Using the intuition that physical machines have finite memory.} set of values that can occur in a program.  The additional values \textit{undefined} and \textit{not called} are added to the codomain to model a variable that is currently undefined, and to model a function that was not called, respectively.  Further, a concrete state models a function call result by storing the return value of the call; if a function has not been called, the value mapped to is \textit{not called}.  Denote by $s\tau$ the symbolic state $\sigma$ that generates $\tau$ at runtime.

Now, consider a run of the function $f$ as a complete path $\pi$ (a path starting at the start state and ending at an end state) through $\text{SCFG}_f$, but with the symbolic states replaced by concrete states holding runtime information.  Definition \ref{def-dds} details the structure of such a run (modelled as a discrete-time dynamical system) with respect to $C$.

\begin{definition}[A Run of $f$ as a Discrete-time Dynamical System]\label{def-dds}

A discrete-time dynamical system constructed from $\text{SCFG}_f = \langle V, E, v_s, V_e \rangle$ using the set $C$ of critical symbols is a tuple $\mathcal{D} = \langle T, \gamma \rangle$ with:

\begin{itemize}
\item $T$ a finite sequence of the form $\langle \tau_1, \tau_2, \dots, \tau_n \rangle$ for concrete states $\tau_i$ such that

\begin{itemize}
	\item $\tau_i$ is derived from some $\sigma_j$ in $V$, hence $s\tau_i = \sigma_j$;
	\item For $\tau_i$ and $\tau_{i+1}$, there is a path $\pi_i$ from $s\tau_i$ to $s\tau_{i+1}$ in $\text{SCFG}_f$; and
	\item For some $x \in C$, \textit{either}
	
	\begin{itemize}
		\item $s\tau_i(x) = \text{changed}$\footnote{If a state changes a value, this can be either because of direct assignment, or as a result of a call to an impure function, hence no restriction is placed on the type of the incident edge.} \textit{or}
		\item if $s\tau_i(x) = \text{called}$, then there is an edge of type $\text{call}$ from $s\tau_{i-1}$ to $s\tau_i$ in $\text{SCFG}_f$.
	\end{itemize}
\end{itemize}

Denote by $|T|$ the length $n$ of the vector and denote by $T(\tau)$ the natural number $i$ such that $(T, i) = \tau$ (where $(T, i)$ is the $i^{\text{th}}$ element of the vector $T$).

\item $\gamma : \mathbb{N}_{|T|} \to \mathbb{R}_{\le}$, the \textit{clock function}, giving the time elapsed in the system $\mathcal{D}$ when the concrete state at position $i$ in $T$ is reached.  In addition, $\gamma(i) = \gamma(j) \iff i = j$ (different concrete states cannot be attained at the same time) and $i < j \iff \gamma(i) < \gamma(j)$ (time must move forward).
\end{itemize}

\end{definition}

With a discrete time dynamical system defined, a natural next step is, for each concrete state $\tau_i$ in the vector $T$ of $\mathcal{D}$, to extend the incident edges of $s\tau_i$ to being a part of the runtime.  Edges in a symbolic control flow graph can have no notion of time since they are statically computed; lifting the notion of an edge to a part of the runtime ultimately allows one to consider timing constraints.  This is done by Definition \ref{def-transition}.

\begin{definition}[Transition]\label{def-transition}
A \textit{transition} $\Delta\tau_i = \tau_i \to \tau_{i+1}$ represents the computation required to move from the concrete state $\tau_i$ to $\tau_{i+1}$ in $T$.  Since the edges $\pi_i$ in the path $\pi$ from $s\tau_i$ to $s\tau_{i+1}$ in $\text{SCFG}_f$ represent instructions, the transition $\Delta\tau_i$ represents the execution of the sequence of instructions in $\pi$ at runtime.
\end{definition}

For a transition $\Delta\tau_i$, denote by $\gamma(\Delta\tau_i)$ the \textit{start time} of the transition where $\gamma(\Delta\tau_i) = \gamma(\tau_i)$.  Hence, the start time of a transition is the time at which the state immediately before it is attained.

From now on, when $\tau_i$ or $\Delta\tau_i$ is written, it is done with the understanding that the timestamp for its occurrence is also available, meaning the pairs $(\tau_i, \gamma(\tau_i))$ and $(\Delta\tau_i, \gamma(\Delta\tau_i))$ are unique\footnote{Without timestamp information, concrete states are simply total functions, and these can be isomorphic; timestamp information breaks the isomorphism and removes ambiguity.}.  Therefore, $\tau_i$ and $\Delta\tau_i$ are notational shortcuts for these pairs.  Furthermore, the containment relation is extended to sequences in the expected way: $\tau_i \in T$ means that there is some $1 \le i \le |T|$ such that $(T, i) = \tau_i$.

\begin{remark}
Transitions can be the computation performed by multiple edges, one after the other; this is from the condition that there must be a \textit{path} between symbolic counterparts, and not just a single edge (though a single edge still constitutes a path, so the symbolic counterparts may be adjacent).
\end{remark}

\begin{remark}
When seen as a function, the operator $s$ that gives the symbolic state $\sigma$ that generates a concrete state $\tau$ holds the following properties:

\begin{itemize}
	\item If there is any branching in $\text{SCFG}_f$ and the section of code with branching is traversed only once, then $s$ cannot be surjective since not every branch is explored at runtime.
	\item If there are any loops in $\text{SCFG}_f$ \textit{and} a symbolic state lies inside a loop that performs more than one iteration, then $s$ cannot be injective since that symbolic state will generate multiple concrete states during runtime.
\end{itemize}
\end{remark}

\begin{remark}
A system $\mathcal{D}$ represents a single run of a function $f$.  If the inputs are changed, or if there is non-determinism in $f$, this results in a different system $\mathcal{D}'$.
\end{remark}

The set of all transitions in a single system $\mathcal{D}$ is denoted by $\Delta\tau$ (with the subscript omitted), but the notation $\mathcal{D} = \langle T, \gamma \rangle$ is maintained since $\Delta\tau$ can be derived from the sequence of concrete states and the times at which they are attained.  The set $\Delta\tau$ has a natural total order $\prec$.  For $\Delta\tau_i, \Delta\tau_j \in \Delta\tau$, $\Delta\tau_i \prec \Delta\tau_j \iff \gamma(\Delta\tau_i) < \gamma(\Delta\tau_j)$.  Using $\prec$, it makes sense to call the minimal element the first element, and build a labelling from there.  This gives a way to index sets of transitions.  Hence, denote by $(\Delta\tau, i)$ the $i^{\text{th}}$ element of $\Delta\tau$ with respect to the ordering $\prec$.

This description of a way to model program runs is concluded by defining some properties of transitions.

\begin{definition}[Properties of Transitions]\label{def-transition-props}
Let $\Delta\tau_i$ be a transition in a system $\mathcal{D}$, with $\Delta\tau_i = \tau_i \to \tau_{i+1}$.  Then, one can define:

\begin{itemize}
	\item $d : \Delta\tau \to \mathbb{R}_{\ge}$ with $d(\Delta\tau_i) = \gamma(\tau_{i+1}) - \gamma(\tau_i)$.  $d(\Delta\tau_i)$ is called the \textit{duration} of $\Delta\tau_i$.
	\item $\text{source}(\Delta\tau_i)$ and $\text{dest}(\Delta\tau_i)$ to be the concrete states $\tau_i$ and $\tau_{i+1}$ respectively.
	\item $\text{incident}(\tau_i)$ to be $\Delta\tau_{i-1}$.
\end{itemize}
\end{definition}

Note that, if a transition $\Delta\tau_i$ corresponds to a function call (that is, it corresponds to a single edge of type \textit{call} in a SCFG), the duration $d(\Delta\tau_i)$ of the transition is the time taken by the function call.  In particular, by the conditions in Definition \ref{def-dds}, if the transition between two concrete states corresponds to a function call, this must be represented by a single edge in the SCFG.  Furthermore, $d(\Delta\tau_i)$ for any transition $\Delta\tau_i$ in a system $\mathcal{D}$ provides a mechanism to directly talk about time constraints, which are simply predicates on the map $d(\Delta\tau_i)$.

\begin{example}\label{eg-dds}
Consider again the code snippet

\begin{lstlisting}
a = 10
for i in range(4):
	if i < 3:
		f(0.1)
	else:
		f(1.1)
\end{lstlisting}

with the SCFG in Figure \ref{subfig-scfg-example}.  In the case of this code, there are no parameters and the program is deterministic so, for each set of critical variables $C$, only one DDS can ever be generated by runs of it.  Suppose $\mathcal{D} = \langle T, \gamma \rangle$ is the DDS with $C = \{a, f\}$.  Then,

\begin{equation*}
\begin{split}
T = & \; ([a \mapsto 10, f \mapsto \text{not called}],
[a \mapsto 10, f \mapsto f(0.1)],\\
& \;[a \mapsto 10, f \mapsto \text{not called}], [a \mapsto 10, f \mapsto f(0.1)],\\
& \;[a \mapsto 10, f \mapsto \text{not called}], [a \mapsto 10, f \mapsto f(0.1)],\\
& \; [a \mapsto 10, f \mapsto \text{not called}], [a \mapsto 10, f \mapsto f(1.1)])
\end{split}
\end{equation*}

and $\gamma$ assigns to each concrete state in the sequence $T$ a timestamp.
\end{example}

This paves the way to defining the new logic that is the main topic of this report.

\section{CFTL and its Semantics}\label{section-logic-semantics}

Given the machinery developed so far, it is now possible to define the new logic; this involves defining the syntax and semantics.  First, some requirements should be given.  The logic should:

\begin{itemize}
	\item Describe constraints over state and time.  The main point of interest for time constraints is over function calls.
	\item Be efficient to check at runtime, meaning reaching a verdict on a property will not generate too much\footnote{``Too much'' depends on the system being monitored.} overhead.
\end{itemize}

With this in mind, the purpose of developing this logic should be reiterated.  Given a function $f$ that is implemented in code, a property $\phi_f$ should be checked with respect to the function $f$.  If the function, at any point, violates this property, a $\bot$ (false) verdict should be reached.  If no violation occurs, the verdict is always either $\top$ (true, so no violation can occur for the section of the code being monitored) or ? (not enough information has been observed yet to reach a true or false verdict).

Formulas in this logic will take the form

\begin{equation}\label{eq-cftl-form}
\phi_f \equiv \forall q_1 \in S_1, \dots, \forall q_n \in S_n : \psi(q_1, \dots, q_n)
\end{equation}

where $S_i$ for $1 \le i \le n$ are domains over which quantification can occur (see Definition \ref{def-qd}) and $\psi(q_1, \dots, q_n)$ is an $n$-ary predicate (a predicate with $q_1, \dots, q_n$ as free variables).  The significant structure of a formula in this logic lies in the predicate $\psi$.  It remains to define the $S_i$; this is done in Definition \ref{def-qd}, but some preliminary work is required to allow the construction of the sets discussed there.

\subsection{Quantification Domains}\label{section-qds}

Let $\mathcal{D} = \langle T, \gamma \rangle$ be a discrete-time dynamical system (see Definition \ref{def-dds}) based on some $\text{SCFG}_f$ with respect to a set of critical symbols $C$.  Now, two binary relations must be defined.  Let:

\begin{itemize}

\item $\text{\underline{is}} \subset \Delta\tau \times 2^{\{\text{assignment}, \text{call}, \text{control flow}\}}$ with $\Delta\tau_i \text{ \underline{is} } t \iff$ the edge to which $\Delta\tau_i$ corresponds in $\text{SCFG}_f$ has type $t$ (if $\Delta\tau \text{ \underline{is} } \{t\}$ for a singleton $\{t\}$, then $\Delta\tau \text{ \underline{is} } t$ is written);

\item $\text{\underline{operates on}} \subset \Delta\tau \times 2^C$ with $\Delta\tau_i \text{ \underline{operates on} } C' \iff \text{for every $x \in C'$, }\text{dest}(\Delta\tau_i)(x) \in \{\text{changed}, \text{called}\}$.  If $\Delta\tau_i \text{ \underline{operates on} } C'$ and $C' = \{x\}$ is a singleton set, for simplicity, one writes $\Delta\tau_i \text{ \underline{operates on} } x$.

\end{itemize}

Now, let $P_{\Delta\tau} : \Delta\tau \times \{\text{assignment}, \text{call}\} \times C \to \{\top, \bot\}$ be a predicate on transitions with

\begin{equation}\label{eq-P-delta}
P_{\Delta\tau}(\Delta\tau_i, t, x) \equiv (\Delta\tau_i \text{ \underline{is} } t) \land (\Delta\tau_i \text{ \underline{operates on} } x)
\end{equation}

If the transition is a function call, the \underline{operates on} relation gives the name of the function being called as well as every $x_i \in C$; if it is an assignment, the relation gives the name of the variable to which it assigns a value.

Now, define a third binary relation $\text{\underline{changes}} \subset \tau \times C$ by $\tau_i \text{ \underline{changes} } x \iff \tau_i(x) \neq \tau_{i-1}(x)$ and so $P_T : T \times C \to \{\top, \bot\}$ is a predicate on concrete states with

\begin{equation}\label{eq-P-tau}
P_T(\tau_i, x) \equiv (\tau_i \text{ \underline{changes} } x).
\end{equation}

With $P_{\Delta\tau}$ and $P_T$ defined, it now makes sense to write $\Delta\tau_i \models P_{\Delta\tau}(\Delta\tau_i, t, x)$, ie, ``the transition $\Delta\tau_i$ holds the property $P_{\Delta\tau}(\Delta\tau_i, t, x)$'', and similarly for $P_T$.  Hence, $\Delta\tau_i \models P_{\Delta\tau}(\Delta\tau_i, t, x) \iff P_{\Delta\tau}(\Delta\tau_i, t, x) \equiv \top$.

Finally, let $\Gamma = P_{\Delta\tau}(\Delta\tau_i, t, x) \text{ or } P_T(\tau_i, x)$ and denote by $S_{\Gamma}$ a set consisting of elements of either $\Delta\tau$ or $T$ such that, $\forall q \in S_{\Gamma}, q \models \Gamma$.  Recall that, by writing $\tau_i$ or $\Delta\tau_i$, the timestamp at which either occurs is understood, allowing sets to contain multiple instances of isomorphic states and transitions (distinguished only by their timestamps).

\begin{example}
One can construct a set of transitions that are function calls of some function $g$ by writing

\begin{gather*}
\Gamma = (\Delta\tau_i \text{ \underline{is} call }) \land (\Delta\tau_i \text{ \underline{operates on} } g), \text{ and so }\\
S_\Gamma = \{\Delta\tau_i \in \Delta\tau : \Delta\tau_i \models \Gamma\}.
\end{gather*}
\end{example}

\begin{example}
One can also construct a set of all states in which $x$ has a new value by writing

\begin{gather*}
\Gamma = \tau_i \text{ \underline{changes} } x, \text{ and so } S_\Gamma = \{\tau_i \in T : \tau_i \models \Gamma\}.
\end{gather*}
\end{example}

The necessary definitions are now in place to properly present the notion of a Quantification Domain and, therefore, make clear what the domains $S_i$ are in the formula Equation \ref{eq-cftl-form}.

\begin{definition}[Quantification Domain (QD)]\label{def-qd}

Let $\Gamma$ be a property over states or transitions, and let $S_\Gamma$ be the set of either states or transitions that hold this property.  Then $S_\Gamma$ is a Quantification Domain (QD).

\end{definition}

As an example, let $S_\Gamma$ be the set of states that change $x$, so $S_\Gamma = \{\tau_i : \tau_i \models (\tau_i \text{ \underline{changes} } x)\}$.  Then, $\phi_f \equiv \forall q \in S_\Gamma : \psi(q)$ (a case of one-dimensional quantification) can be interpreted as ``For every state that changes the program variable $x$, the predicate $\psi(q)$ on that state should hold''.  This a natural way of applying properties to programs and is the idea followed through this report.

To finish this section, the precise definition of the natural ordering with respect to time is needed for quantification domains.  Let $S$ be a quantification domain of either concrete states or transitions taken from the system $\mathcal{D} = \langle T, \gamma \rangle$.  Then the total ordering $\prec$ on $S$ induced by $\gamma$ is such that, for $s_1, s_2 \in S$, $s_1 \prec s_2 \iff \gamma(s_1) < \gamma(s_2)$.

Using this total order, the minimal element is the first element, and a labelling can be applied from there, thus generating an indexing of the elements of a quantification domain.

\subsection{Points of Interest and Future Time}

Given a formula with one-dimensional quantification, $\forall q \in S : \psi(q)$, one can refer to each $q \in S$ as a \textit{point of interest}.  In particular, $\psi(q)$, once defined, will be a predicate on both $q$ \textit{and} other states/transitions in $\mathcal{D}$ that have some relationship to $q$ (eg, the next transition with respect to $q$ that holds some property $\Gamma$ based on the relations described in Section \ref{section-qds}).

The next step is to define a set of functions that, given a \textit{point of interest} $q \in S$ for some quantification domain $S$, will give either a single element or a set of elements that have some relationship to $q$.

\begin{definition}[Future-time Operators]\label{def-future-time-operators}
Let $\mathcal{D} = \langle T, \gamma \rangle$ be a discrete-time dynamical system.  Then,

\begin{itemize}
\item $\text{next}_{\Delta\tau}(q, \phi)$ gives the next transition in time with respect to $q$ that satisfies $\phi$:

\begin{equation*}
\begin{split}
\text{next}_{\Delta\tau}(q, \phi) = & \; \Delta\tau_i \text{ such that } \\
& (\gamma(\Delta\tau_i) > \gamma(q), \Delta\tau_i \models \phi \text{ and } \\
& \not\exists \Delta\tau_j \in \Delta\tau \text{ with } \gamma(q) < \gamma(\Delta\tau_j) < \gamma(\Delta\tau_i) \text{ and } \Delta\tau_j \models \phi).
\end{split}
\end{equation*}

$\text{next}_\tau$ is similar, but for states.

\item $\text{future}_{\Delta\tau}(q, \phi)$ gives all future transitions in time with respect to $q$ that satisfy $\phi$:

$$
\text{future}_{\Delta\tau}(q, \phi) = \{\Delta\tau_i : \gamma(\Delta\tau_i) > \gamma(q) \text{ and } \Delta\tau_i \models \phi\}
$$

$\text{future}_\tau$ is again similar, but for states.

\end{itemize}

\end{definition}

\begin{remark}
For any $q \in S$ for a quantification domain consisting of concrete states or transitions and a predicate $\phi$ written in the form seen in Section \ref{section-qds}, $\text{next}_{\Delta\tau}(q, \phi) \in \text{future}_{\Delta\tau}(q, \phi)$ and similarly for the future-time operators that give states.
\end{remark}

Notice that, since $\text{next}_{\Delta\tau}$ or $\text{next}_{\tau}$ yield single elements with respect to a system $\mathcal{D}$, these cannot be quantified over; it makes no sense.  However, $\text{future}_{\Delta\tau}$ and $\text{future}_{\tau}$ can indeed be quantified over, since they yield sets.

\begin{example}
Consider the DDS in Example \ref{eg-dds} with the sequence of states

\begin{equation*}
\begin{split}
T = & \; ([a \mapsto 10, f \mapsto \text{not called}],
[a \mapsto 10, f \mapsto f(0.1)],\\
& \;[a \mapsto 10, f \mapsto \text{not called}], [a \mapsto 10, f \mapsto f(0.1)],\\
& \;[a \mapsto 10, f \mapsto \text{not called}], [a \mapsto 10, f \mapsto f(0.1)],\\
& \; [a \mapsto 10, f \mapsto \text{not called}], [a \mapsto 10, f \mapsto f(1.1)]).
\end{split}
\end{equation*}

Fix $q = [a \mapsto 10, f \mapsto \text{not called}]$ (the first entry in $T$) and $\phi \equiv (\Delta\tau_i \text{ \underline{is} call }) \land (\Delta\tau_i \text{ \underline{operates on} } f)$.  Then $\text{next}_{\Delta\tau}(q, \phi)$ (or $\text{next}_{\Delta\tau}(q, f)$ when the type of the transition is understood) refers to the transition

$$
[a \mapsto 10, f \mapsto \text{not called}] \to [a \mapsto 10, f \mapsto f(0.1)]
$$

where the notation used is consistent with that in Definition \ref{def-transition}.
\end{example}

Now, the structure of $\psi(q_1, \dots, q_n)$ is presented in Definition \ref{def-psi}.

\begin{definition}[Form of $\psi(q_1, \dots, q_n)$]\label{def-psi}

The form of $\psi$ is given by the grammar

\begin{equation*}
\begin{split}
\psi := & \; \forall q \in S : \psi_2\\
\psi_2 := & \; \forall q' \in S'(q) : \psi_2 \mid \psi_3\\
\psi_3 := & \; \top \mid p(q) \mid \lnot\psi_3 \mid \psi_3 \lor \psi_3\\
\end{split}
\end{equation*}

Note that this grammar can only generate formulas in prenex normal form.  Note also that implication and conjunction can be expressed by the usual identifies\footnote{$p \implies q \equiv \lnot p \lor q$ and $p \land q \equiv \lnot(\lnot p \lor \lnot q).$}.  Here, $p$ follows the context-sensitive grammar below for some $x \in C$; $q$ is the current binding from a quantification domain; and $q' \in S'(q)$ denotes nested quantification with respect to a quantification domain whose computation requires a binding $q$ from some $S$.

\begin{equation*}
\begin{split}
p(q : \text{\textbf{state}}) := & \; q(x) = n \mid q(x) \in (n, m) \mid q(x) \in [n, m] \mid p(\text{incident}(q))\\ 
p(q: \text{\textbf{transition}}) := & \; d(q) \in [n, m] \mid d(q) \in (n, m) \mid p(\text{source}(q)) \mid p(\text{dest}(q))\\
p(q : \text{\textbf{state}} \text{ or } \text{\textbf{transition}}) := & \; p(\text{next}_{\Delta\tau}(\zeta)) \mid p(\text{next}_{\tau}(\eta))
\end{split}
\end{equation*}

where $\zeta$ and $\eta$ take the form defined in Equations \ref{eq-P-delta} and \ref{eq-P-tau}, respectively and $m$ and $n$ are natural numbers fixed when the formula is written.  

\end{definition}

It is assumed in the grammar that formulas have at most as many free variables as there are bound variables from quantification; there can be no free variables that are not bound by some quantifier.

Now, the semantics can be given.  Some notation is introduced, first:  $S(q)$, for a quantification domain $S$ depending on an existing binding $q$, denotes the instance of $S$ when the binding $q$ is given.  For a formula $\phi_f \equiv \forall q_1 \in S_1, \dots, \forall q_n \in S_n : \psi(q_1, \dots, q_n)$ with quantification sequence $\forall q_1 \in S_1, \dots, \forall q_n \in S_n$, a binding $\beta$ is a map from bind variables to $T \cup \Delta\tau$ (for $T$ and $\Delta\tau$ part of some DDS) derived from the quantification sequence.  Note that bindings may be partial functions.

\begin{definition}[Definition of $\beta \models \phi_f$ for a system $\mathcal{D}$]\label{def-semantics}

Let $\mathcal{D} = \langle T, \gamma \rangle$ be a discrete-time dynamical system, let $\phi_f \equiv \forall q_1 \in S_1, \dots, \forall q_n \in S_n : \psi(q_1, \dots, q_n)$ be a property in CFTL and let $\beta$ be a binding taken from the quantification sequence $\forall q_1 \in S_1, \dots \forall q_n \in S_n$.  Then, the relation $\beta \models \psi(q_1, \dots, q_n)$ is defined by:

\begin{equation*}
\begin{split}
\beta \models \top\\
\beta \models \phi_1(q_1, \dots, q_n) \lor \phi_2(q_1, \dots, q_n) \iff & \; \beta \models \phi_1(q_1, \dots, q_n) \text{ or } \beta \models \phi_2(q_1, \dots, q_n)\\
\beta \models \lnot\phi(q_1, \dots, q_n) \iff & \; \beta \not\models \phi(q_1, \dots, q_n)\\
\end{split}
\end{equation*}

Now, take a state $s$ from a binding $\beta$ (that is, $\beta(q_i) = s$ for some $i$).  Then, the semantics for $s \models p(q)$ (where $p$ has the structure given in Definition \ref{def-psi}) follows.

\begin{equation*}
\begin{split}
s \models q(u) = n \iff & \; s(u) = n\\
s \models q(u) \in (n, m) \iff & \; s(u) \in (n, m)\\
s \models q(u) \in [n, m] \iff & \; s(u) \in [n, m]\\
s \models p(\text{incident}(q)) \iff & \; p(\text{incident}(s)) \text{ holds}\\
\end{split}
\end{equation*}

Suppose now that $t$ is a transition taken from a binding $\beta$ (that is, $\beta(q_i) = t$ for some $i$).  Then, the semantics for $t \models p(q)$ (where $p$ has the structure given in Definition \ref{def-psi}) follows.

\begin{equation*}
\begin{split}
t \models d(q) \in (n, m) \iff & \; d(t) \in (n, m)\\
t \models d(q) \in [n, m] \iff & \; d(t) \in [n, m]\\
t \models p(\text{source}(q)) \iff & \; p(\text{source}(t)) \text{ holds}\\
t \models p(\text{dest}(q)) \iff & \; p(\text{dest}(t)) \text{ holds}\\
\end{split}
\end{equation*}

Supposing that $e$ is either a transition \textit{or} a state.  Then the remaining semantics is:

\begin{equation*}
\begin{split}
e \models p(\text{next}_{\Delta\tau}(\zeta)) \iff & \; \text{there is } \Delta\tau_j \text{ such that:}\\
& \; (\text{there is no } \Delta\tau_k \in \Delta\tau\text{ with $\gamma(e) < \gamma(\Delta\tau_k) < \gamma(\Delta\tau_j)$}\\
& \;\;\; \text{ and }\Delta\tau_k \models \zeta)\\
& \; \text{ and  } p((\Delta\tau, j))  \text{ holds}\\
e \models p(\text{next}_{\tau}(\zeta)) \iff & \; \text{there is } j \text{ such that:}\\
& \; (\text{there is no } \tau_k \in \tau \text{ with $\gamma(e) < \gamma(\tau_k) < \gamma((T, j))$}\\
& \;\;\; \text{ and }\tau_k \models \zeta)\\
& \; \text{ and  } p((T, j)) \text{ holds}\\
\end{split}
\end{equation*}

\end{definition}

\begin{remark}
In $\phi_f \equiv \forall q_1 \in S_1, \dots, \forall q_n \in S_n : \psi(q_1, \dots, q_n)$, the first quantification domain $S_1$ used is necessarily \textit{independent} of any bindings, that is, it does not require any binding from any other quantification domain to be computed.  All other quantification domains $S_i$ are necessarily \textit{dependent} on some $S_j$ with $j < i$.
\end{remark}

Definition \ref{def-semantics} gives the notion of a binding $\beta$ satisfying a formula $\psi(q_1, \dots, q_n)$.  This definition is now extended to say what it means for a system $\mathcal{D}$ to hold the property $\phi_f$, written $\mathcal{D} \models \phi_f$.  Denote by $\textsc{Obs} = (s_1, \dots, s_k)$ the \textit{current observation sequence} such that $s_i \in T \cup \Delta\tau$ for some DDS $\mathcal{D} = \langle T, \gamma \rangle$.  An observation sequence is intuitively the data observed so far from a runtime; some $\textsc{Obs}$ is said to be \textit{well-formed} when, for every $1 \le i < k$, $\gamma(s_{i+1}) > \gamma(s_i)$.  Consequently, only well-formed observation sequences are considered as these are the only ones that can be received from the runtime of a monitored program.

Now, consider a formula $\phi_f \equiv \forall q_1 \in S_1, \dots, \forall q_n \in S_n : \psi(q_1, \dots, q_n)$.  The quantification sequence of this formula defines a set $B^*$ of bindings $\beta$ that map the bind variables to elements of $T \cup \Delta\tau$ of some DDS.  Given a current observation sequence $\textsc{Obs}$, a subset of $B^*$ can be said to be generated.  Denote such a subset by $B^*(\textsc{Obs})$ (such a subset can contain maps that are partial, since some information required to construct full bindings may not have been observed, yet).  Now, $\textsc{Obs}$ is a finite prefix of another finite sequence $\textsc{Obs}^*$ that represents the sequence of observations obtained by observing the entire runtime, hence it is natural to consider \textit{extensions} of such a prefix, say $\textsc{Obs}'$, and consider the concatenation of sequences, $\textsc{Obs} + \textsc{Obs}'$.  One can then write $\textsc{Obs} + \textsc{Obs}' = \textsc{Obs}^*$.

The 3-valued semantics of a formula $\phi_f$ based on a finite prefix of an observation sequence can therefore be given in terms of these finite prefixes, as it is in Definition \ref{def-3-valued-semantics}.  This definition gives a value to $[\textsc{Obs} \models \phi_f]$, which denotes the verdict of $\phi_f$ given the observation sequence (a finite prefix) $\textsc{Obs}$.

\begin{definition}[3-valued semantics of $\phi_f$]\label{def-3-valued-semantics}
Let $\phi_f$ be a formula in CFTL and $\textsc{Obs}$ be the current observation sequence.  Then,

$$
[\textsc{Obs} \models \phi_f] =
\begin{cases}
\top & \text{if for every } \textsc{Obs}', |B^*(\textsc{Obs} + \textsc{Obs}')| = |B^*(\textsc{Obs})|\\
& \; \text{ and for every } \beta \in B^*(\textsc{Obs}), \beta \models \psi\\
\bot & \text{ if there is } \beta \in B^*(\textsc{Obs}) : \beta \not\models \psi\\
? & \text{ if } (\text{there is } \beta \in B^*(\textsc{Obs}) : \psi(\beta) \text{ is neither } \top \text{ nor } \bot) \text{ or }\\
& \; (\text{there is } \textsc{Obs}' : |B^*(\textsc{Obs} + \textsc{Obs}')| > |B^*(\textsc{Obs})|).
\end{cases}
$$
\end{definition}

The intuition is as such:

\begin{itemize}
	\item $[\textsc{Obs} \models \phi_f] = \top$ when no extension of the current sequence of observations can introduce a new binding and, for every existing binding $\beta \in B^*(\textsc{Obs})$, $\psi$ holds.
	
	\item $[\textsc{Obs} \models \phi_f] = \bot$ when, given the current sequence of observations, there is already a binding under which $\psi$ does not hold; observation of more data and expansion of $B^*$ cannot change this.
	
	\item $[\textsc{Obs} \models \phi_f] = \; ?$ when:
	
	\begin{enumerate}
	\item There is a binding $\beta \in B^*$ that does not contain enough information to decide $\psi(\beta)$, where $\psi(\beta)$ denotes $\psi$ with values substituted in from $\beta$; or
	\item There is a possibility that the remainder of the runtime of the program under scrutiny will generate more bindings, which could give rise to a violation of $\psi$.  However, there may also be extensions of the current observation sequence that do not violate $\psi$; one cannot know which extension will be observed.
	\end{enumerate}
\end{itemize}

Finally, for a DDS $\mathcal{D} = \langle T, \gamma \rangle$, write $\mathcal{D} \models \phi_f$ if and only if there is some finite prefix $\textsc{Obs}$ of the complete observation sequence of $\mathcal{D}$, $\textsc{Obs}^*$, such that $[\textsc{Obs} \models \phi_f] = \top$.  Notice that this means that all other finite prefixes with $\textsc{Obs}$ as a prefix will also be models for $\phi_f$ by virtue of Definition \ref{def-3-valued-semantics}.

\begin{remark}
The space of bindings $B^*(\textsc{Obs})$ derived from an observation sequence may contain partial bindings and still contain enough information to decide $\psi$ for each binding.
\end{remark}

From now on, for a binding $B = [q_1 \mapsto v_1, \dots, q_n \mapsto v_n]$, notation will be reduced to $(v_1, \dots, v_n)$ with the understanding that the bind variables in a formula have the same order as the one in which they are written.

\begin{example}\label{eg-obs-sequence}
Consider again the code snippet:

\begin{lstlisting}
a = 10
for i in range(4):
	if i < 3:
		f(0.1)
	else:
		f(1.1)
\end{lstlisting}

and verification of the property

\begin{equation}\label{eq-example-obs}
\phi \equiv \forall q \in S_a : q(a) = 10 \implies d(\text{next}_{\Delta\tau}(q, f)) \in [0, 1].
\end{equation}

Suppose that the runtime has been observed up to a point where the observation sequence is 

\begin{equation*}
\begin{split}
\textsc{Obs} = & \; ([a \mapsto 10, f \mapsto \text{not called}],\\
& \; [a \mapsto 10, f \mapsto \text{not called}] \to [a \mapsto 10, f \mapsto f(0.1)],\\
& \; [a \mapsto 10, f \mapsto f(0.1)]).
\end{split}
\end{equation*}

From this observation sequence, $B^*(\textsc{Obs}) = \{([a \mapsto 10, f \mapsto \text{not called}])\}$; the quantification sequence in the formula in Equation \ref{eq-example-obs} consists of a single bind variable, hence $([a \mapsto 10, f \mapsto \text{not called}])$ is a full binding.  Since the set $B^*(\textsc{Obs})$ is the set of all bindings that can be derived based on the quantification sequence in the formula being checked, and further observation will not yield a larger set of bindings, a true verdict is reached.
\end{example}

\begin{example}
Consider now the same code as in Example \ref{eg-obs-sequence}, but with verification of the property

\begin{equation}\label{eq-example-obs-future}
\phi \equiv \forall q \in S_a, \forall t \in \text{future}_{\Delta\tau}(q, f) : q(a) = 10 \implies d(t) \in [0, 1].
\end{equation}

Suppose that the runtime has, again, been observed up to a point where the observation sequence is

\begin{equation*}
\begin{split}
\textsc{Obs} = & \; ([a \mapsto 10, f \mapsto \text{not called}],\\
& \; [a \mapsto 10, f \mapsto \text{not called}] \to [a \mapsto 10, f \mapsto f(0.1)],\\
& \; [a \mapsto 10, f \mapsto f(0.1)]).
\end{split}
\end{equation*}

Then $B^*(\textsc{Obs}) = \{([a \mapsto 10, f \mapsto \text{not called}], [a \mapsto 10, f \mapsto \text{not called}] \to [a \mapsto 10, f \mapsto f(0.1)])\}$, where the second bind variable is a transition, since it is obtained from a binding from the set $\text{future}_{\Delta\tau}(q, f)$ in the quantifier sequence.

Notice that extending the observation sequence:

\begin{equation*}
\begin{split}
\textsc{Obs} = & \; ([a \mapsto 10, f \mapsto \text{not called}],\\
& \; [a \mapsto 10, f \mapsto \text{not called}] \to [a \mapsto 10, f \mapsto f(0.1)],\\
& \; [a \mapsto 10, f \mapsto f(0.1)],\\
& \; [a \mapsto 10, f \mapsto \text{not called}],\\
& \; [a \mapsto 10, f \mapsto \text{not called}] \to [a \mapsto 10, f \mapsto f(0.1)],\\
& \; [a \mapsto 10, f \mapsto f(0.1)]).
\end{split}
\end{equation*}

yields the set of bindings

\begin{equation*}
\begin{split}
B^*(\textsc{Obs}) = & \; \{([a \mapsto 10, f \mapsto \text{not called}], [a \mapsto 10, f \mapsto \text{not called}] \to [a \mapsto 10, f \mapsto f(0.1)]),\\
& \; ([a \mapsto 10, f \mapsto \text{not called}], [a \mapsto 10, f \mapsto \text{not called}] \to [a \mapsto 10, f \mapsto f(0.1)])\}
\end{split}
\end{equation*}

where the bindings are distinguished by the fact that, when written down, concrete states and transitions are understood to be paired with their timestamps.

To finish this example, notice that observing further calls to $f$ would yield further bindings, and so would expand $B^*(\textsc{Obs})$.  By Definition \ref{def-3-valued-semantics}, this means $[\textsc{Obs} \models \phi]$ cannot be equal to $\top$; rather it must be the case that $[\textsc{Obs} \models \phi] = \;?$ since extensions $\textsc{Obs}'$ of $\textsc{Obs}$ generate larger spaces of bindings $B^*(\textsc{Obs} + \textsc{Obs}')$.

\end{example}

It remains to define the structure of CFTL monitors (Chapter \ref{chapter-monitor-synthesis}), and how these are used in the instrumentation and monitoring algorithms (Chapters \ref{chapter-instrumentation} and \ref{chapter-cftl-monitoring}).

\chapter{Monitor Synthesis}\label{chapter-monitor-synthesis}

Now the semantics (see Definition \ref{def-semantics}) have been developed, it remains to build a mechanism with which to perform online verification of the DDS $\mathcal{D}$ (hence, the function run that it models) against some property.  The method explored in this chapter expresses the formula $\psi(q_1, \dots, q_n)$ in $\forall q_1 \in S_1, \dots, \forall q_n \in S_n : \psi(q_1, \dots, q_n)$ as a tree, which is progressively collapsed as more information needed to come to a verdict about the property is observed.

\section{Online Monitors}

A monitor constructed for online use should have three states, $\{\top, \bot, ?\}$, in agreement with the CFTL semantics in Definition \ref{def-3-valued-semantics}.  The truth value of a formula with respect to observed data must, in the context of online monitoring, be able to be ``I don't know''.

These three states correspond to three distinct configurations of a formula tree (defined in Definition \ref{def-formula-tree}): the root collapsed to $\top$, meaning the truth value is $\top$; the root collapsed to $\bot$, meaning the truth value is $\bot$; and the root not collapsed (still a tree), meaning the truth value is ?.

The definitions that follow introduce the inductively-defined formula tree, along with the notion of collapse of subtrees of formula trees.  This initial definition of a formula tree will be in terms of propositional logic, and is easy to extend to CFTL.

\begin{definition}[Formula Tree]\label{def-formula-tree}
Let $\psi$ be a formula in propositional logic.  Then the Formula Tree $T_{\psi}$ is a directed graph $(V, E, V^*)$ where:

\begin{itemize}
\item $V$ is a set of vertices corresponding to sub-formulas of $\phi$;
\item $E$ is a set of edges where $(\phi, \phi') \in E \iff$ $\phi'$ is a sub-formula of $\phi$;
\item $V^* \in V$ is the root vertex.
\end{itemize}

A tree is defined inductively:

\begin{itemize}

\item $\bigwedge_{i=1}^{n}\phi_i$ generates vertices $\{\bigwedge_{i=1}^{n}\phi_i, \phi_1, \dots, \phi_n\}$ and edges

$$\Bigg\{\Bigg(\bigwedge_{i=1}^{n}\phi_i, \phi_1\Bigg), \dots, \Bigg(\bigwedge_{i=1}^{n}\phi_i, \phi_n\Bigg)\Bigg\}.$$

\item $\bigvee_{i=1}^{n}\phi_i$ generates vertices $\{\bigvee_{i=1}^{n}\phi_i, \phi_1, \dots, \phi_n\}$ and edges

$$\Bigg\{\Bigg(\bigvee_{i=1}^{n}\phi_i, \phi_1\Bigg), \dots, \Bigg(\bigvee_{i=1}^{n}\phi_i, \phi_n\Bigg)\Bigg\}.$$

\end{itemize}

If any $\phi_i$ generated is itself non-atomic (is a conjunction or disjunction; negations are counted as atoms), then the rules for trees generated by conjunctions/disjunctions are applied again.
\end{definition}

A slightly abridged formula tree for $(p \land q) \lor r$ constructed using Definition \ref{def-formula-tree} is given in Figure \ref{fig-formula-tree}.

\begin{figure}[ht]
\centering
\includegraphics[width=0.3\linewidth]{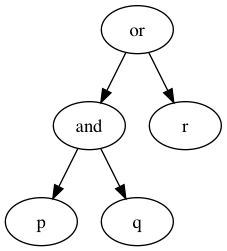}
\caption{\label{fig-formula-tree}The formula tree for $(p \land q) \lor r$.}
\end{figure}

\begin{definition}[Sub-tree Collapse]\label{def-subtree-collapse}
Let $T_{\psi} = (V, E, V^*)$ be the formula tree constructed using Definition \ref{def-formula-tree}, and let $T_{\psi'} = (V', E', (V')^*)$ be a sub-tree of $T_{\psi}$ for a sub-formula $\psi'$ of $\psi$.  Then, the collapse of $T_{\psi'}$ to a truth value $p \in \{\top, \bot\}$ is performed by replacing all edges $(u, (V')^*) \in E$ with $(u, p)$.
\end{definition}

A sub-tree $T_{\psi'}$ can be collapsed provided that one of the following conditions holds:

\begin{itemize}
\item $(V')^* \equiv \phi_1 \lor \dots \lor \phi_n$ for some sub-formulas $\phi_i$ (corresponding to sub-trees) and there is at least one $\phi_i \equiv \top$ (by observation of an atom, or by collapse of sub-trees).  In this case, the collapse is to $\top$.
\item $(V')^* \equiv \phi_1 \land \dots \land \phi_n$ for some sub-formulas $\phi_i$ (corresponding to sub-trees) and all $\phi_i \equiv \top$.  In this case, the collapse is to $\top$.
\item $(V')^* \equiv \phi_1 \land \dots \land \phi_n$ for some sub-formulas $\phi_i$ (corresponding to sub-trees) and there is some $\phi_i = \bot$.  In this case, the collapse is to $\bot$.
\item $(V')^*$ is either a conjunction or disjunction, and all $\phi_i \equiv \bot$.  In this case, the collapse is to $\bot$.
\end{itemize}

\section{Formula Trees extended to CFTL}

Let $\phi_f \equiv \forall q_1 \in S_1, \dots, q_n \in S_n : \psi(q_1, \dots, q_n)$ be a formula in CFTL.  Hence, $\psi(q_1, \dots, q_n)$ consists only of results of predicates joined together with the standard propositional connectives (see Definition \ref{def-psi}).  Then, since the atoms in a CFTL formula are results of predicates applied to concrete states or transitions, deriving the formula tree of $\psi(q_1, \dots, q_n)$ is simply a matter of identifying the predicates and considering those as atoms in the construction described in Definition \ref{def-formula-tree}.  Figure \ref{fig-cftl-formula-tree} illustrates the formula tree computed for the formula

$$
\phi_f \equiv \forall q \in S_a : q(a) \in (0, 10) \implies d(\text{next}(q, f)) \in (0, 1).
$$

Notice that there is no notion of quantification in the formula tree; this is dealt with in the monitoring algorithm presented in Chapter \ref{chapter-cftl-monitoring}

\begin{figure}

\centering

\begin{tikzpicture}[>=latex,line join=bevel,]
  \pgfsetlinewidth{1bp}
\pgfsetcolor{black}
  \draw [->] (92.169bp,72.202bp) .. controls (84.848bp,63.418bp) and (75.893bp,52.672bp)  .. (61.454bp,35.345bp);
  \draw [->] (122.08bp,72.202bp) .. controls (129.47bp,63.483bp) and (138.49bp,52.831bp)  .. (153.09bp,35.593bp);
\begin{scope}
  \definecolor{strokecol}{rgb}{0.0,0.0,0.0};
  \pgfsetstrokecolor{strokecol}
  \draw (168.0bp,18.0bp) node {$d(\text{next}(q, f)) \in (0, 1)$};
\end{scope}
\begin{scope}
  \definecolor{strokecol}{rgb}{0.0,0.0,0.0};
  \pgfsetstrokecolor{strokecol}
  \draw (47.0bp,18.0bp) node {$\lnot(q(a) \in (0, 10))$};
\end{scope}
\begin{scope}
  \definecolor{strokecol}{rgb}{0.0,0.0,0.0};
  \pgfsetstrokecolor{strokecol}
  \draw (107.0bp,90.0bp) node {$\lnot(q(a) \in (0, 10)) \lor d(\text{next}(q, f)) \in (0, 1)$};
\end{scope}
\end{tikzpicture}

\caption{\label{fig-cftl-formula-tree}}
\end{figure}

\section{Formula Trees as Monitors}

Let $\phi$ be a formula in CFTL.  Then it has the form $\phi_f \equiv \forall q_1 \in S_1, \dots, \forall q_n \in S_n : \psi(q_1, \dots, q_n)$.  Now, let $M_{\phi_f}$ be a \textit{monitor}, that is, a mechanism derived from $\psi(q)$ that will report violations when a system $\mathcal{D}$ does not satisfy $\psi(q_1, \dots, q_n)$ for some binding $q_1, \dots, q_n$ as soon after the information required to conclude this is available, and will do nothing when there is no violation.  Definition \ref{def-monitor} introduces $M_{\phi_f}$ formally.

\begin{definition}[Monitor $M_{\psi}$]\label{def-monitor}
If $\phi_f \equiv \forall q_1 \in S_1, \dots, q_n \in S_n : \psi(q_1, \dots, q_n)$, let $T_{\psi}$ be the formula tree for $\psi(q_1, \dots, q_n)$.  Then the monitor $M_{\psi}$ derived from $T_{\psi}$ (and, therefore, from $\psi$) is the tuple $(T_{\psi}, R)$ where $R$ is a map that gives the result/verdict of the monitor from the value of the root vertex of $T_{\psi}$:

$$
R(T_{\psi}) = \begin{cases*}
\top & if $(T_{\psi})^* \equiv \top$\\
\bot & if $(T_{\psi})^* \equiv \bot$\\
? & otherwise
\end{cases*}
$$
\end{definition}

\begin{example}\label{example-monitor-collapse}
Let $\phi \equiv (p \lor q) \land \lnot(r \lor s)$ and let $\alpha = [p \mapsto \top, q \mapsto \top, r \mapsto \bot, s \mapsto \bot]$ be an assignment of the atoms in $\phi$ to truth values.  Based on the collapsation rules in Definition \ref{def-subtree-collapse}:

\begin{itemize}
\item $(p \lor q)$ can be collapsed to $\top$, since $p \equiv q \equiv \top$ in $\alpha$.  The resulting formula is $\phi \equiv \top \land \lnot(r \lor s)$.
\item $\lnot(r \lor s)$ can be rewritten, by DeMorgan's laws, as $\lnot r \land \lnot s$, at which point $\lnot r \land \lnot s$ can be collapsed to $\top$, reducing $\phi$ to $\top$.
\end{itemize}
\end{example}

In Example \ref{example-monitor-collapse}, the assignment $\alpha$ is assumed to be complete.  In the context of online monitoring, an assignment $\alpha$ is constructed as the atoms in $\text{dom}(\alpha)$ are observed to be true or false.

The algorithms in this section will, therefore, compute a truth value in $\{\top, \bot, ?\}$ based on an observed atom, where an atom $p$ is true if and only if $p$ is observed, and $\lnot p$ is not observed.

\section{Algorithms for Checking Formula Tree Truth Values}

Some preliminary definitions are required to make the algorithms present in this section more concise.  Definition \ref{def-formula-closure} gives the set of all sub-formulas of a formula written using the propositional connectives assumed so far.  The remaining two definitions, \ref{def-limited-closure} and \ref{def-limited-m-closure}, amend Definition \ref{def-formula-closure}.  The overall effect is to simplify the presentation of the monitor collapse algorithms presented later.

\begin{definition}[Formula Closure]\label{def-formula-closure}
Let $\phi$ be a formula in the propositional logic.  The closure of $\phi$, denoted by $\text{cl}(\phi)$, is the set of all sub-formulas of $\phi$ with respect to the logical connectives $\lor, \land$.  $\text{cl}(\phi)$ is defined inductively as:

\begin{itemize}

\item $\text{cl}(\bigvee_{i=1}^{n}\phi_i) = \{\bigvee_{i=1}^{n}\phi_i\} \cup \Bigg(\bigcup_{i=1}^{n} \text{cl}(\phi_i)\Bigg)$,

\item $\text{cl}(\bigwedge_{i=1}^{n}\phi_i) = \{\bigwedge_{i=1}^{n}\phi_i\} \cup \Bigg(\bigcup_{i=1}^{n} \text{cl}(\phi_i)\Bigg)$,

\item $\text{cl}(\lnot \phi) = \{\lnot \phi, \phi\} \cup \text{cl}(\phi)$,

\item $\text{cl}(p) = \{p\}$ for an atom $p$.
\end{itemize}
\end{definition}

\begin{remark}
The closure is not defined with respect to negation because, in implementation, DeMorgan's laws are applied to propagate any negation of non-atomic sub-formulas inside, to the atoms.  For example, $\lnot(p \land q)$ is rewritten as $\lnot p \lor \lnot q$.  This ultimately makes the algorithms simpler.
\end{remark}

\begin{example}\label{example-closure}
Let $\phi \equiv (p \land q) \lor r$ be a formula in the propositional logic.  Then $\text{cl}(\phi) = \{(p \land q) \lor r, p \land q, p, q, r\}$.  Since the semantics of the logic in Section \ref{section-logic-outline} uses the propositional logic connectives, Definition \ref{def-formula-closure} is extended to it easily.
\end{example}

\begin{example}\label{example-closure-new-logic}
Let $\phi \equiv \forall s \in S_\Gamma : s(x) \in (0, 10) \implies d(\text{next}_{\Delta\tau}(\alpha)) \in (0, 100)$.  Then

\begin{equation*}
\begin{split}
\text{cl}(\psi(q)) = & \; \{s(x) \in (0, 10) \implies d(\text{next}_{\Delta\tau}(\alpha)) \in (0, 100),\\
& \;\lnot(s(x) \in (0, 10)), s(x) \in (0, 10), d(\text{next}_{\Delta\tau}(\alpha)) \in (0, 100)\}\\
\end{split}
\end{equation*}

based on the identity $p \implies q \equiv \lnot p \lor q$.
\end{example}

\begin{remark}
The closure $\text{cl}(\phi)$ of a formula $\phi$ is analogous to the power set $\mathcal{P}(S)$ of some finite set $S$.
\end{remark}

\begin{definition}[Limited Formula Closure]\label{def-limited-closure}
The limited closure of a propositional formula $\phi$, $\text{l-cl}(\phi)$, is defined, as in Definition \ref{def-formula-closure}, inductively:

\begin{itemize}

\item $\text{l-cl}(\bigvee_{i=1}^{n}\phi_i) = \{\phi_1, \dots, \phi_n\},$

\item $\text{l-cl}(\bigwedge_{i=1}^{n}\phi_i) = \{\phi_1, \dots, \phi_n\},$

\item $\text{l-cl}(p) = \emptyset$ for an atom $p$.
\end{itemize}
\end{definition}

\begin{definition}[Limited Formula Closure as a Multiset]\label{def-limited-m-closure}
As in Definition \ref{def-limited-closure}, but where the sets used are multisets, hence repetitions are allowed.  A limited closure with multisets of a formula $\phi$ is denoted by $\text{l-cl-m}(\phi)$.
\end{definition}

\begin{remark}
Taking the cardinality of the set defined in Definition \ref{def-limited-m-closure} gives the arity of the formula.
\end{remark}

\begin{remark}
Definitions \ref{def-limited-closure} and \ref{def-limited-m-closure} give the operands of a formula, since their definitions do not include recursion.
\end{remark}

\subsection{Recursive Traversal}

Let $\alpha$ be the current assignment for a formula $\phi$ whose truth value is required.  Let $\alpha' = \alpha[p \mapsto b]$ for $b \in \{\top, \bot\}$ (using the map modification notation from Section \ref{section-static-model}).  This section presents two algorithms whose purpose is to take as input an atom, say $p$, and update a formula tree.  Algorithm \ref{alg-recursive-traversal} takes a naive approach without optimisation, and Algorithm \ref{alg-optimised-check} implements a simple optimisation to make finding the points in the formula tree to replace with truth values faster.

Algorithm \ref{alg-recursive-traversal} assumes $\alpha$ (in that the formula $\phi$ already has any $p \in \text{dom}(\alpha)$ replaced by $\alpha(p)$) and computes the new truth value of $\phi$ with $\alpha'$, which is $\alpha$ with the addition of a newly observed atom.

It does this simply by recursing down the formula tree $T_{\psi}$ and collapsing (by Definition \ref{def-subtree-collapse}) vertices if any of the collapse conditions are fulfilled.  Before a bound on the complexity of lookup in such an algorithm is given, the sub-formula relation must be given and is as such:  For formulas $\phi_1$ and $\phi_2$, $\phi_1 \sqsubset \phi_2 \iff$ $\phi_1 \in \text{cl}(\phi_2)\backslash\{\phi_2\}$.  Now, only the complexity of the lookup phase (the phase in which the vertices of the formula tree corresponding to the observed atom are found) is bounded because collapse is essentially the same in Algorithms \ref{alg-recursive-traversal} and \ref{alg-optimised-check}.

\begin{prop}\label{prop-recursive-complexity}
Algorithm \ref{alg-recursive-traversal}, operating on a formula tree $T_{\psi} = (V, E, V^*)$ of a formula $\phi$ with input atom $a$, has worst-case time complexity for the lookup phase $O(|V|)$ where $|V| \le (1 - p^{n+1})/(1 - p)$.  Here, $p = \text{max}\{|\text{l-cl-m}(\psi)|  :  \psi \in \text{cl}(\phi)\}$ and $n$ is the length $k$ of the maximal sequence $(\phi_1, \dots, \phi_k)$ such that $\phi_i \in \text{cl}(\phi)$ and $\phi_i \sqsubset \phi_{i+1}$ for all $1 \le i < k$.
\end{prop}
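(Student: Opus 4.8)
The plan is to separate the statement into two independent claims and prove each in turn: (i) the lookup phase runs in $O(|V|)$ time, which is a fact about the behaviour of Algorithm \ref{alg-recursive-traversal}, and (ii) $|V|$ itself satisfies the stated geometric bound, which is a purely combinatorial fact about the shape of the formula tree $T_\psi$.

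For the complexity claim I would observe that the lookup phase of Algorithm \ref{alg-recursive-traversal} is a single recursive descent through $T_\psi$: beginning at the root $V^*$, it inspects a vertex, performs an $O(1)$ test of whether that vertex corresponds to the observed atom $a$, and recurses into the vertex's children. Since $T_\psi$ is a tree (Definition \ref{def-formula-tree}), this descent visits each vertex at most once, so in the worst case -- when the recursion is driven all the way to the leaves -- the number of inspections is at most $|V|$. As each inspection costs $O(1)$, the lookup phase runs in $O(|V|)$ time.

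For the bound on $|V|$ I would argue that $T_\psi$ is a rooted tree subject to two structural constraints. First, its branching factor is at most $p$: the children of a vertex labelled by a sub-formula $\psi'$ are exactly its immediate operands, and by Definitions \ref{def-limited-closure} and \ref{def-limited-m-closure} the number of those operands is $|\text{l-cl-m}(\psi')|$, which is at most $p$ for every $\psi' \in \text{cl}(\phi)$ by the definition of $p$. Here I would use the fact that negations have been pushed down to the atoms, so every internal vertex is a genuine conjunction or disjunction and its out-degree really does equal its operand count. Second, the depth of $T_\psi$ is controlled by $n$: a root-to-leaf path corresponds to a strictly descending chain in the proper-subformula order $\sqsubset$, and by hypothesis the longest such chain has length $n$, so no path descends past the level indexed by $n$. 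Placing the root at level $0$, level $i$ then contains at most $p^i$ vertices, and summing the geometric series over the levels present gives $|V| \le 1 + p + \dots + p^{n} = (1 - p^{n+1})/(1 - p)$, as required.

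The main obstacle is the combinatorial second claim, and in particular the two identifications that drive the count: that a vertex's out-degree in $T_\psi$ equals $|\text{l-cl-m}|$ of its labelling sub-formula, and that root-to-leaf paths are precisely the maximal $\sqsubset$-chains whose length is governed by $n$. Both follow from Definition \ref{def-formula-tree} together with the closure definitions, but they deserve care -- notably the treatment of negated atoms as leaves rather than unary nodes, and the exact correspondence between the number of vertices on the longest path and the exponent appearing in the geometric sum. Reading the chain length $n$ as a count of vertices rather than of edges shifts the top exponent by one; since the sum $1 + p + \dots + p^{n}$ only grows when its top exponent is increased, the stated closed form is in any case a valid upper bound, which is all the proposition requires. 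I would also make explicit that this is only an upper bound, since a concrete $T_\psi$ need not be the complete $p$-ary tree that saturates every level.
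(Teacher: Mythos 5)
Your proposal is correct and follows essentially the same route as the paper's proof: both arguments reduce the lookup cost to the vertex count of $T_{\psi}$ and then bound $|V|$ by comparison with a full tree of arity $p$ and height $n$, summing the geometric series $p^0 + p^1 + \dots + p^n$. Your version is somewhat more careful than the paper's about justifying the two identifications (out-degree equals $|\text{l-cl-m}|$ of the labelling sub-formula, and depth governed by maximal $\sqsubset$-chains) and about the vertex-versus-edge reading of $n$, but these are refinements of the same argument rather than a different one.
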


\begin{proof}
The complexity of lookup in such a recursive traversal algorithm can be determined by finding the number of vertices in the formula tree $T_{\psi}$, since each one must be traversed (in the worst case) to determine the new truth value of the formula $\phi$ given a newly observed symbol $a$.  Hence, it suffices to find a bound for the size of the tree.

Suppose $T_{\psi}$ is the formula tree constructed as in Definition \ref{def-formula-tree} for the formula $\phi$.  Further suppose that the maximum arity of sub-formulas in $\phi$ is $p = \text{max}\{|\text{l-cl-m}(\psi)| : \psi \in \text{cl}(\phi)\}$.  Here, the maximum arity is determined by taking the cardinality of the largest multiset of direct sub-formulas from any formula in the closure of $\phi$.  This is the largest arity found anywhere in $\phi$.  Now, with $p$ and $n$ in mind, consider another formula tree $T_{\psi}' = (V', E', (V')^*)$ which is full with height $n$ and arity $p$, hence has $p^n$ leaves.  The upper bound on the complexity will be the number of vertices in this tree.

The number of vertices is $p^0 + p^1 + p^2 + \dots + p^n$, summing down the rows of the tree $T_{\psi}'$, hence

$$
|V'| = \sum_{k=0}^{n}{p^k} = \frac{1 - p^{n+1}}{1 - p}.
$$

Then, the recursion must visit at most $(1 - p^{n+1})/(1 - p)$ vertices, hence this forms an upper bound (though a loose one).  The result follows.
\end{proof}

\begin{algorithm}[ht]
\begin{algorithmic}[1]

\Procedure{check}{$\phi$, $a$}

\If{$\phi$ \textit{is} conjunction or $\phi$ \textit{is} disjunction}

	\If{$\phi$ \textit{is} conjunction and $\bot \in \text{l-cl}(\phi)$}
		\Return false
	\ElsIf{$\phi$ \text{is} disjunction and $\top \in \text{l-cl}(\phi)$}
		\Return true
	\ElsIf{$\text{l-cl}(\phi) = \{\top\}$}
		\Return true
	\ElsIf{$\text{l-cl}(\phi) = \{\bot\}$}
		\Return false
	\EndIf
	
	\State operands $\gets \text{l-cl-m}(\phi)$
	
	\For{$n \in \{0, \dots, |\text{operands}|\}$}
	
		\If{$\text{operands}_n \not\in \{\top, \bot\}$}
			\If{$\text{operands}_n$ \textit{is} atom}
				\If{$\text{operands}_n = a$}
				
					\State $\text{operands}_n \gets \top$
					
					\If{$\phi$ \textit{is} disjunction}
						\Return true
					\ElsIf{$\phi$ \textit{is} conjunction}
						\State true\_clauses($\phi$)++
						\If{true\_clauses($\phi$) = $|\text{operands}|$}
							\Return true
						\EndIf
					\EndIf
					
				\ElsIf{$\text{operands}_n = \lnot a$}
					\State $\text{operands}_n \gets \bot$
					\If{$\phi$ \textit{is} conjunction}
						\Return false
					\EndIf
				\EndIf
			\Else
				\State sub\_verdict $\gets$ check($\text{operands}_n$, $a$)
				\If{sub\_verdict $= \top$}
					\State $\text{operands}_n \gets \top$
					\If{$\phi$ \textit{is} disjunction}
						\Return true
					\ElsIf{$\phi$ \textit{is} conjunction}
						\State true\_clauses($\phi$)++
						\If{true\_clauses($\phi$) = $|\text{operands}|$}
							\Return true
						\EndIf
					\EndIf
				\ElsIf{sub\_verdict $ = \bot$}
					\State $\text{operands}_n \gets \bot$
					\If{$\phi$ \textit{is} conjunction}
						\Return false
					\EndIf
				\EndIf
			\EndIf
		\EndIf
		
	\EndFor
	
\ElsIf{$\phi$ \textit{is} negation}
	\Return $(\phi = a)$
\EndIf

\EndProcedure
\end{algorithmic}
\algolisting{alg-recursive-traversal}{Recursive Traversal of a Formula Tree}
\end{algorithm}

\subsection{Formula Closure Map Optimisation}

Algorithm \ref{alg-recursive-traversal} requires a traversal of the entire formula tree (which progressively shrinks due to collapses) for every atom observed during runtime.  This is a source of inefficiency, and this section describes an optimisation that has been implemented based on the closure (see Definition \ref{def-formula-closure}) of a formula.  Let $\phi$ be a formula, and let $\text{cl}(\phi)$ be its closure.  Then, let $M$ be a map sending atoms found in $\phi$ (leaves of the tree $T_{\psi}$) to the sub-formulas to which they belong.

\begin{example}\label{example-map-optimisation}
Let $\phi \equiv (p \lor q) \land (p \lor \lnot r)$.  Then $M(p) = \{p \lor q, p \lor \lnot r\}$, $M(r) = \emptyset$ (since the only occurrence in $\phi$ is negative) and $M(\lnot r) = \{p \lor \lnot r\}$.
\end{example}

Algorithm \ref{alg-closure-map-construction} constructs the map that Example \ref{example-map-optimisation} demonstrates.  Using this map, when an atom $a$ is observed, the entire tree $T_{\psi}$ need not be traversed; the vertices of relevance can be found immediately.  It only remains to traverse back up the tree, collapsing vertices as far as is possible; this can be done with a straightforward iteration.

\begin{prop}\label{prop-optimised-complexity}
Algorithm \ref{alg-optimised-check}, operating on a formula tree $T_{\psi} = (V, E, V^*)$ of a formula $\phi$ with input atom $a$, has worst-case time complexity for the lookup phase $O(1)$, and has an amortised complexity of $O((1 - p^{n+1})/(1 - p))$ for the construction of the closure map.  $n$ is, like in Proposition \ref{prop-recursive-complexity}, the bound for the height of the tree $T_{\psi}$.
\end{prop}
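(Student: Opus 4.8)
The plan is to split the argument along the two phases named in the statement --- lookup and construction --- and to analyse each against the same tree-size bound established in the proof of Proposition \ref{prop-recursive-complexity}. First I would address the lookup phase. Once the closure map $M$ of Example \ref{example-map-optimisation} is available, observing an atom $a$ no longer requires any traversal of $T_\psi$: the vertices of relevance are precisely the sub-formulas in $M(a)$, retrieved by a single access into $M$. Modelling $M$ as a hash table keyed on atoms, this retrieval is a constant-time operation, so the lookup phase costs $O(1)$, in contrast to the $O(|V|)$ traversal incurred by Algorithm \ref{alg-recursive-traversal}.

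Next I would bound the cost of building $M$. The map is constructed exactly once, by a single pass over the closure $\text{cl}(\phi)$: for each sub-formula $\psi \in \text{cl}(\phi)$ and each atom occurring directly in $\psi$ (that is, each atomic element of $\text{l-cl-m}(\psi)$), the entry for that atom is augmented with $\psi$. The total number of such insertions is at most the number of leaf-occurrences in $T_\psi$, which is bounded by the number of edges of the tree and hence by $|V|$. By the proof of Proposition \ref{prop-recursive-complexity}, $|V| \le (1 - p^{n+1})/(1 - p)$ with $p$ and $n$ as defined there, so --- taking each hash insertion to cost $O(1)$ --- the construction runs in $O((1 - p^{n+1})/(1 - p))$.

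Finally I would make the amortised claim precise. The closure map is a one-off cost paid at monitor initialisation and then reused for every subsequent atom observation, so its $O((1 - p^{n+1})/(1 - p))$ construction cost is a fixed overhead spread over the whole observation sequence; per observation the monitor pays only the $O(1)$ lookup plus an amortised share of this fixed cost.

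The step I expect to carry the real weight is justifying the two $O(1)$ claims under a realistic cost model. Both the per-observation lookup and the per-entry insertion during construction rely on constant-time hashing, which is constant only in the amortised (expected) sense; this is exactly why the statement phrases the construction bound as amortised rather than strictly worst-case. A secondary point to get right is that the work done per sub-formula while building $M$ is proportional to its arity rather than constant, but since $\sum_{\psi \in \text{cl}(\phi)} |\text{l-cl-m}(\psi)|$ counts the edges of $T_\psi$ it is still $O(|V|)$, so the arity factor is absorbed into the same bound.
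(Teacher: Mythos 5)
Your proposal is correct and follows essentially the same route as the paper's proof: lookup is $O(1)$ by a single access into the closure map, and construction is bounded by the size of the formula tree, which is at most $(1 - p^{n+1})/(1 - p)$ by the counting argument from Proposition \ref{prop-recursive-complexity}, with the cost amortised because the map is built only once. Your additional remarks on the hashing cost model and on summing arities over $\text{cl}(\phi)$ are finer-grained than the paper's one-line justification (``every vertex must be traversed during the recursion'') but do not change the argument.
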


\begin{proof}
Lookup of the relevant vertices when a symbol is observed is $O(1)$ by the map constructed by Algorithm \ref{alg-closure-map-construction}.

Construction of the closure map by Algorithm \ref{alg-closure-map-construction} is in

$$
O\Bigg(\frac{1 - p^{n+1}}{1 - p}\Bigg)
$$

since every vertex must be traversed during the recursion.  This process only happens once, hence its complexity is amortised.
\end{proof}

To conclude this presentation of the monitoring algorithms, it should be noted that the size of the formula tree decreases as more information is observed; collapse can never increase the size of the tree.  This means, for a formula $\psi(q_1, \dots, q_n)$, the space complexity for a monitor is $O(|\text{cl}(\psi)|)$ (ie, the number of subformulas, which is the size of the formula tree) during its lifetime.

\begin{algorithm}

\begin{algorithmic}[1]
\Procedure{closure\_map}{$\phi$}

\If{$\phi$ \textit{is} conjunction or $\phi$ \textit{is} disjunction}
	\State operands $\gets \text{l-cl-m}(\phi)$
	\For{$n \in \{0, \dots, |\text{operands}|\}$}
		\If{$\text{operands}_n$ \text{is} atom}
			\State $M(\text{operands}_n) \gets M(\text{operands}_n) \cup \{\phi\}$
		\Else
			\State closure\_map($\text{operands}_n$)
		\EndIf
	\EndFor
\EndIf

\EndProcedure
\end{algorithmic}

\algolisting{alg-closure-map-construction}{Recursive Closure Map Construction}
\end{algorithm}

\begin{algorithm}

\begin{algorithmic}[1]
\Procedure{optimised\_check}{$a$}

\State occurrences $\gets M(a) \cup M(\lnot a)$

\For{$\psi \in \text{occurrences}$}
	\If{$a$ \textit{is} positive}
		\If{$a$ is positive in $\psi$}
			\State Replace $a$ with $\top$ in $\psi$
		\ElsIf{$a$ is negative in $\psi$}
			\State Replace $a$ with $\bot$ in $\psi$
		\EndIf
	\Else
		\If{$a$ is positive in $\psi$}
			\State Replace $a$ with $\bot$ in $\psi$
		\ElsIf{$a$ is negative in $\psi$}
			\State Replace $a$ with $\top$ in $\psi$
		\EndIf
	\EndIf
	
	\State current\_formula $\gets \psi$
	\State collapsed\_value $\gets$ possible truth value of current\_formula
	
	\While{collapsed\_value \textit{is not} undefined}
		\If{current\_formula has a parent formula}
			\State Replace current\_formula with collapsed\_value in parent
			\State current\_formula $\gets$ parent of current\_formula
			\State current\_value $\gets$ possible truth value of current\_formula
		\Else
			\State Set final verdict to collapsed\_value
			\Return collapsed\_value
		\EndIf
	\EndWhile
\EndFor

\Return ?

\EndProcedure
\end{algorithmic}

\algolisting{alg-optimised-check}{Truth value checking with the closure map optimisation constructed in Algorithm \ref{alg-closure-map-construction}.}
\end{algorithm}





\chapter{Instrumentation}\label{chapter-instrumentation}

Instrumentation of a function $f$ for checking a property $\phi_f$ is the process of placing \textit{instruments} (pieces of code that will read some values at runtime) in the code for $f$ such that data taken by them at runtime can be used to decide the truth value of $\phi_f$.  In the language developed in this report, this can be formulated as the following problem:

\begin{displayquote}
Given a symbolic control flow graph $\text{SCFG}_f$ of a function $f$, and a formula $\phi_f$, find the minimal set of vertices and edges in $\text{SCFG}_f$ such that, with appropriate instrumentation of them, information generated by some run modelled by some system $\mathcal{D}$ will be captured by the instruments.  The information captured by the instruments should be the minimal amount required to decide the truth value of $\phi_f$.
\end{displayquote}

\section{A Strategy}\label{section-strategy}

Given the structure of formulas in CTFL given in Definition \ref{def-psi}, the approach taken to instrumentation for a formula $\phi_f$ is to use the combination of the quantification domain(s), and the atoms in the formula (which are all results of predicates).

With this in mind, consider the formula $\phi_f \equiv \forall q \in S_x : \psi(q)$, where $S_x$ is the set of all concrete states changing the value to which the program variable $x$ is mapped.  The strategy used will be to transform $S_x$ into the static context (a set of symbolic states that generate the concrete states at runtime) and use each $sq$ (recall that $sq$ gives the symbolic state that generates $q$) in this set as a \textit{point of interest}, from which the formula $\psi(q)$ can be used to decide which surrounding points in the SCFG must be instrumented.

Denote by $A(\psi(q))$ the set of atoms in the formula $\psi(q)$.  For example, $\psi(q) \equiv q(x) \in [0, 10] \implies d(\text{next}_{\Delta\tau}(q, \nu)) \in [0, 10]$ gives $A(\psi(q)) = \{q(x) \in [0, 10], d(\text{next}_{\Delta\tau}(q, \nu)) \in [0, 10]\}$ (notice that the atoms are predicates applied to bindings from quantification domains).  Notice that $A(\psi(q))$ is independent of $q$.

Further, denote by $s(S_x) = \{\sigma : sq = \sigma \text{ for } q \in S_x\}$ the set of symbolic states $\sigma$ that generate concrete states $q$ (hence, $sq = \sigma$) at runtime.  $s(S_x)$ can be seen as a symbolic support of $S_x$; $s(S_x)$ is the image of the set $S_x$ under $s$ when $s$ is taken as a map.  The set $s(S_x)$ is the set used for instrumentation; each $\sigma \in s(S_x)$ is either a vertex or an edge in $\text{SCFG}_f$.

The next step is to determine, for each $A(\psi(q))$ (where $\psi$ may depend on multiple bind variables but is considered only to depend on one here to simplify notation), the traversal that should be performed on $\text{SCFG}_f$ in order to determine the set of points to which instrumentation should be applied.  For this, let $\alpha \in A(\psi(q))$ be an atom, that is, the value of some predicate applied to some concrete state or transition in a run of $f$ modelled by $\mathcal{D}$.  Hence, $\alpha = P(\tau_i)$ or $\alpha = P(\Delta\tau_i)$ for a predicate $P$ (as an example, let $P(\Delta\tau_i) = d(\text{next}_{\Delta\tau}(\Delta\tau_i, \nu)) \in I$ for a transition $\Delta\tau_i$ and an interval $I$).  Then, the \textit{composition sequence} of an atom $\alpha$ is given in Definition \ref{def-comp-sequence}.

\begin{definition}[Composition Sequence of an atom $\alpha$]\label{def-comp-sequence}
For an atom $\alpha \in A(\psi(q))$, the composition sequence is the sequence $\langle f_1, \dots, f_n, P \rangle$ representing the composition $P \circ f_n \circ \dots \circ f_1$ where:

\begin{itemize}
\item $f_1$ is a map from $q$ to a set of symbolic states or edges.
\item $f_i$ for $1 < i \le n$ are maps from sets of symbolic states or edges in $\text{SCFG}_f$ to other sets of symbolic states or edges.  In particular, for a set of symbolic states or edges $A$, each $f_i$ is such that

$$f_i(A) = \bigcup_{p \in A} f_i(p),$$

ie, $f_i$ maps a set of symbolic states or edges to the union of the images of the members of that set under $f_i$.
\item $P$ is a predicate and has no meaning statically; it requires runtime information to be evaluated.
\item Finally, $(P \circ f_n \circ \dots \circ f_1) = \alpha$.
\end{itemize}

\end{definition}

Based on Definition \ref{def-comp-sequence}, the composition sequences are derived for each $\alpha \in A(\psi(q))$.  It remains to give detail to the $f_i$.  For example, letting $\alpha = d(\text{next}_{\Delta\tau}(q, \nu)) \in [0, 10]$, the composition sequence is $\langle \text{next}_{\Delta\tau}(q, \nu), d(\Delta\tau_i) \in [0, 10] \rangle$ where $f_1 = \text{next}_{\Delta\tau}(q, \nu)$ and $P = d(\Delta\tau_i) \in [0, 10]$, but a way is needed to compute $((d(\Delta\tau_i) \in [0, 10]) \circ (\text{next}_{\Delta\tau}(q, \nu)))(q)$ (ie, to find the next transition satisfying $\nu$ and measure its duration).

To this end, Section \ref{section-static-operators} extends the already defined operators $\text{next}_{\Delta\tau}, \text{next}_{\tau}, \text{future}_{\Delta\tau}$ and $\text{future}_{\tau}$ to make sense statically, since the only definitions given so far are defined in terms of a DDS $\mathcal{D}$.

\section{Future Time Operators in the Static Context}\label{section-static-operators}

Consider the future time operator $\text{next}_{\Delta\tau}(q, \nu)$.  With respect to a model $\mathcal{D}$ of a run of a function $f$, $\text{next}_{\Delta\tau}(q, \nu)$ refers to a single transition; there is a total order on transitions in $\mathcal{D}$, so the notion of \textit{next} yields a unique element.  This is different when considering the meaning of $\text{next}_{\Delta\tau}(q, \nu)$ in the context of an SCFG: there is no guaranteed total ordering on vertices or edges (the only ordering is with respect to paths, and branching means that some elements are incomparable), so $\text{next}_{\Delta\tau}(q, \nu)$ when considered statically must yield a set.

The future time operators in the static context are defined case-by-case.  Before giving their definitions, some terminology is set up:  say an edge $e$ is \textit{before} an edge $e'$ in an SCFG if and only if there is a path $\pi$ starting from the symbolic state at which $e$ originates and ending at the symbolic state at which $e'$ begins.  This notion of a partial order on edges and symbolic states using just symbolic states can be extended to say that a symbolic state $\sigma$ is \textit{before} an edge $e$; an edge $e$ is \textit{before} a symbolic state $\sigma$ and a symbolic state $\sigma$ is \textit{before} a symbolic state $\sigma'$.

Additionally, an edge $e$ in SCFG holds a property $\eta$ (of a form similar Equation \ref{eq-P-delta}, but defined in terms of edges rather than transitions), written $e \models \eta$ (similarly to the notation for transitions generated by a runtime holding properties in Section \ref{section-qds}) if and only if $\eta(e) \equiv \top$.

Now, let $\text{SCFG}_f = \langle V, E, v_s, V_e \rangle$ be a symbolic control flow graph.  Then, for a symbolic state or edge, say $\beta$, in $\text{SCFG}_f$:

\begin{equation}\label{eq-static-next}
\begin{split}
\text{next}_{\Delta\tau}(\beta, \eta) = \{ e \in E : & \; \text{$\beta$ is before $e$},\\
& \; e \models \eta \text{ and}\\
& \; \text{on the path $\pi$ (from $\beta$ to $e$),}\\
& \;\;\; \text{there is no $\beta''$ before $e$ with $\beta'' \models \eta$ and}\\
& \;\;\; \text{where $\beta$ is before $\beta''$.}\}.
\end{split}
\end{equation}

\begin{equation}\label{eq-static-future}
\begin{split}
\text{future}_{\Delta\tau}(\beta, \eta) = \{ e \in E : & \; \text{$\beta$ is before $e$ and } e \models \eta\}.
\end{split}
\end{equation}

Notice that the definition of $\text{future}_{\Delta\tau}$ uses a weaker condition than that of $\text{next}_{\Delta\tau}$: \textit{future} involves all future occurrences, unboundedly.

\begin{remark}
The future time operators in the static context for symbolic states, rather than edges, are defined similarly to Equations \ref{eq-static-next} and \ref{eq-static-future}.
\end{remark}

\begin{example}
Figure \ref{fig-modified-scfg} demonstrates the reachability analysis used in Equations \ref{eq-static-next} and \ref{eq-static-future}.  Figure \ref{subfig-modified-scfg} shows an SCFG.  Consider fixing a symbolic state $q$ to be the symbolic state represented by the vertex $a$.  Then, in this static context, $\text{next}_{\Delta\tau}(q, f)$ (where $f$ is used as a notational shortcut for the predicate that selects calls to $f$) is equal to a set.  Figure \ref{subfig-modified-scfg-highlighted} highlights the vertex $a$ in red and the vertices that are members of $\text{next}_{\Delta\tau}(q, f)$ in blue.  Clearly, $a$ is before every vertex in $\text{next}_{\Delta\tau}(q, f)$.

\begin{figure}
\begin{subfigure}{0.5\linewidth}
\includegraphics[width=\linewidth]{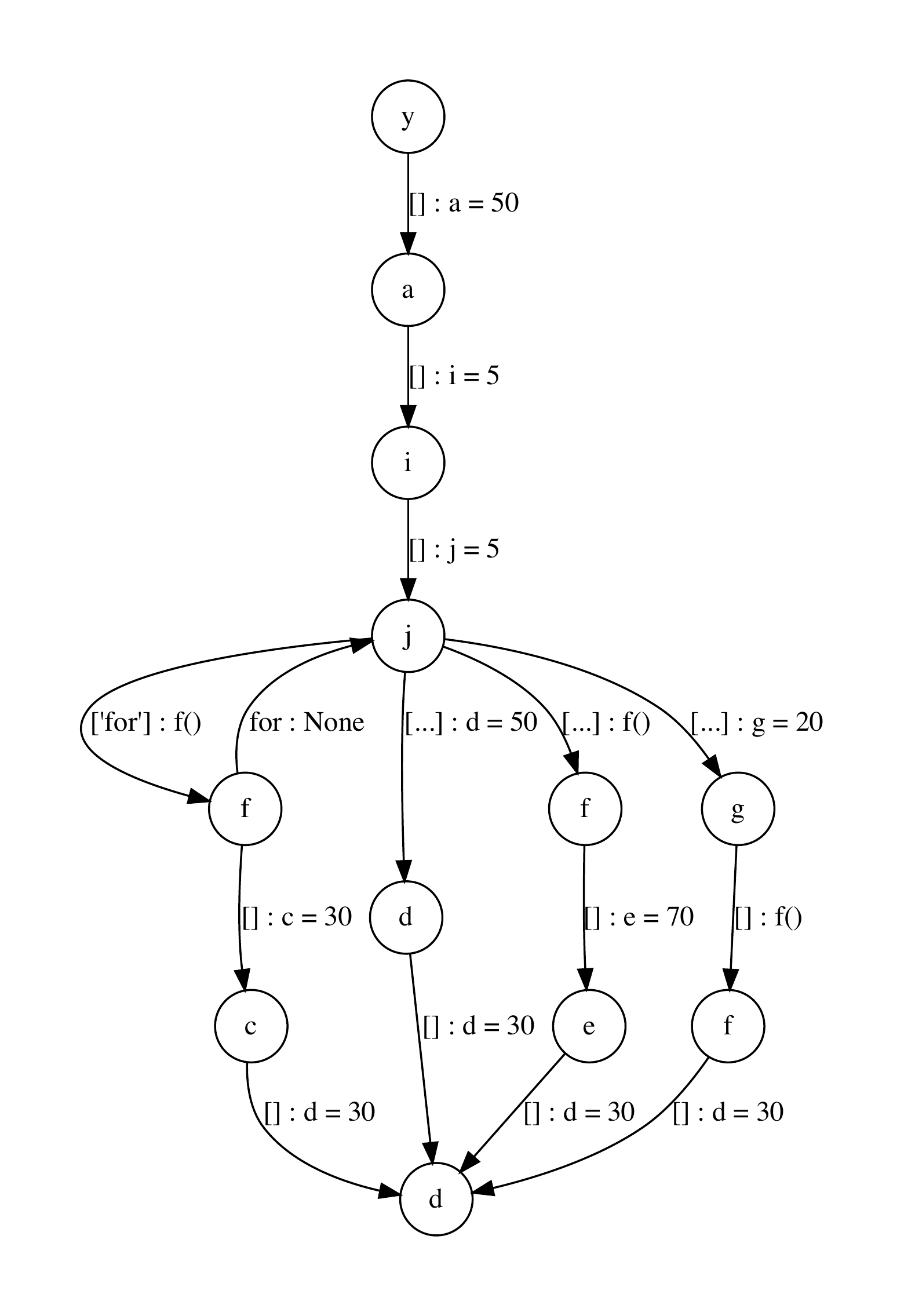}
\caption{\label{subfig-modified-scfg-highlighted}A simplified SCFG.}
\end{subfigure}%
\begin{subfigure}{0.5\linewidth}
\includegraphics[width=\linewidth]{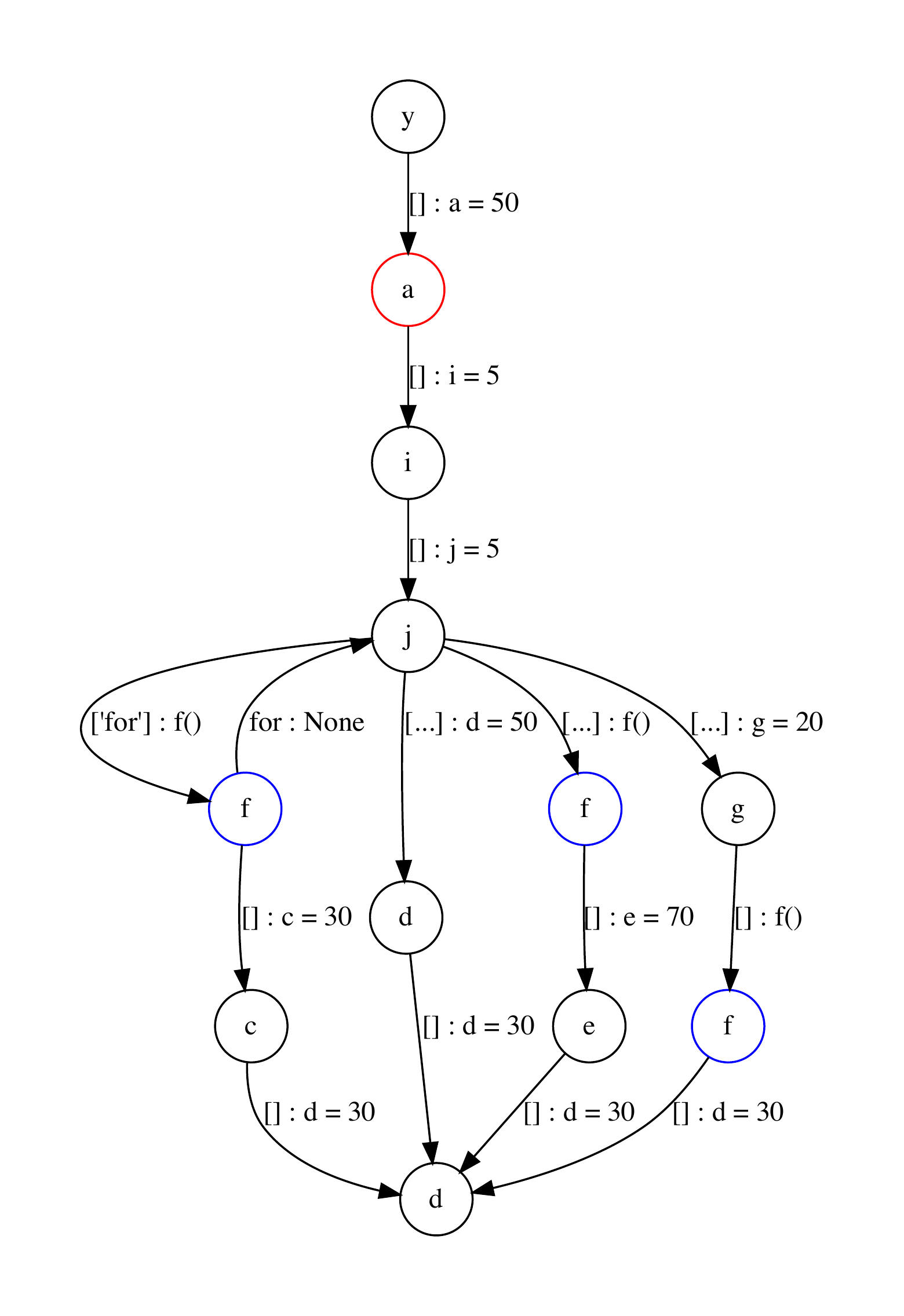}
\caption{\label{subfig-modified-scfg}A simplified SCFG with vertices highlighted according to reachability analysis.}
\end{subfigure}
\caption{\label{fig-modified-scfg}Simplified SCFGs with reachability analysis.}
\end{figure}
\end{example}

The instrumentation strategy for a singly-quantified formula $\phi_f \equiv \forall q \in S_x : \psi(q)$ can, therefore, be summarised as such:

\begin{enumerate}
\item Determine the set of symbolic states $s(S_x)$ of the quantification domain $S_x$.
\item For each $\sigma \in s(S_x)$, derive the set of instrumentation points by combining the set of atoms $A(\psi(q))$ with the composition sequences for each $\alpha \in A(\psi(q))$.
\item Instrument according to the predicate present at the end of each composition sequence.  For example, $P = d(\Delta\tau_i) \in [0, 10]$ means the instrument inserted must measure the duration of the transition $\Delta\tau_i$.
\end{enumerate}

Section \ref{section-multiple-quant} addresses how to deal with multiple quantification; an immediate thought is to form the product of the quantification domains, $S_1 \times \dots \times S_n$, but this is not straightforward since the computation of every $S_j$ ($j > 1$) must depend on a binding $q_{j-1}$ from $S_{j-1}$.

\section{Multiple Quantification}\label{section-multiple-quant}

Suppose a formula $\phi_f$ is given by $\phi_f \equiv \forall q_1 \in S_1 , \dots , \forall q_n \in S_n : \psi(q_1, \dots, q_n)$.  The major source of complication here is that for each $1 < i \le n$, $S_i$ must depend on a binding from some $q_j \in S_j$ for some $j < i$, hence the binding cannot be generated by simply taking the product $S_1 \times \dots \times S_n$ of $n$ sets; there is dependency between them.

It follows that one must explicitly recompute the quantification domains given a binding from another set on which they depend.  Take as an example $\phi_f \equiv \forall q \in S_x, \forall t \in \text{future}_{\Delta\tau}(q, \nu) : \psi(q, t)$.  In this case, to obtain the binding $t$, one must compute the set $\text{future}_{\Delta\tau}(q, \nu)$, which itself requires a binding $q$ from $S_x$.

Since the current problem of interest is instrumentation, the computation of bindings from multiple quantifiers must be considered in the static context.  It is clear that $S_x$ must be computed first since this has no dependence on other quantification domains.  This can be computed statically by finding the symbolic states in the SCFG.  Now, consider $\text{future}_{\Delta\tau}(\beta, \nu)$ where $q$ is replaced by $\beta$ to emphasise the static context.  Using the definition of \textit{future} in Equation \ref{eq-static-future}, computing this set is straightforward and requires reachability analysis on the SCFG.  In the context of instrumentation, this approach generates a set of pairs of instrumentation points which, at runtime, generates a possibly larger set of pairs (since single symbolic states/edges can correspond to multiple concrete states/transitions at runtime) on which the verification is actually performed.

\subsection{Instrumenting for Multiple Quantification}

The notion of static pairs can be extended to static bindings for any $n$-quantified formula.  The notion of a \textit{Binding Space} is now given in Definition \ref{def-binding-space}.

\begin{definition}[Binding Space]\label{def-binding-space}
Given $\text{SCFG}_f$ of a function $f$ and a formula $\phi_f \equiv \forall q_1 \in S_1, \dots, \forall q_n \in S_n : \psi(q_1, \dots, q_n)$, the binding space $\mathcal{B}$ of $\phi_f$ with respect to the quantification sequence $\forall q_1 \in S_1, \dots, \forall q_n \in S_n$ is set of tuples $(sq_1, \dots, sq_n)$ such that, if some $S_j$ ($j > 1$) depends on $S_{j-1}$, then $sq_j$ is a member of the set $s(S_j)$ computed using the binding $sq_{j-1} \in s(S_{j-1})$.
\end{definition}

Instrumentation is performed for multiply-quantified formulas by computing the binding space $\mathcal{B}$ for an SCFG entirely and, for each binding $B \in \mathcal{B}$, computing the set of instrumentation points $I(\psi, B)$ by applying the reachability analysis derived from the composition sequences of atoms in $A(\psi)$ to the appropriate components of the binding $B \in \mathcal{B}$.

Algorithm \ref{alg-construct-binding-space} is used to statically construct the binding space $\mathcal{B}$ from an SCFG using the quantifier sequence of the formula $\phi_f$ and a \textit{reachability map}, which is now defined.  A reachability map $R : V \to (V \cup E)$ for some $\text{SCFG}_f = \langle V, E, v_s, V_e \rangle$ is a map sending vertices in $\text{SCFG}_f$ to sets of reachable vertices and edges.  Its construction is trivial; the vertices in $V$ are iterated over and a depth-first search is performed.  It also makes sense to consider a similar map $E \to (V \cup E)$, and indeed such a map is required.

Given some $\text{SCFG}_f$, a formula $\phi_f$ and a reachability map $R$, the idea of Algorithm \ref{alg-construct-binding-space} is as follows.  Given a binding $B = (sq_1, \dots, sq_k)$ for $k < n$ for $\phi_f$ containing a sequence of $n$ quantifiers (hence, the binding given may be partial), then, recursively:

\begin{itemize}
	\item If $B$ has length 0, then the computation is begun by computing the independent quantification domain $S_1$, since this necessarily does not require a binding from a previous domain for its computation.
	
	Computation of $S_1$ is straightforward; either the vertex set or edge set is filtered depending on the predicate that specifies $S_1$.
	
	For each of the (static) elements $sq_1$ of $S_1$, recurse on that element with the partial binding $(sq_1)$.
	
	\item If $B$ has length $k < n$, hence is still a partial binding, let the quantifier for which the next part of the binding is computed be that at position $k+1$ in the quantifier sequence of $\phi_f$.  Then, to compute the next part of the binding, the quantifier on which it depends is found so that the value at that position in the binding can be used.
	
	Given the binding $(sq_1, \dots, sq_k)$ for $1 \le k < n$, this step aims to compute the bindings $(sq_1, \dots, sq_k, sq_{k+1})$ for $sq_{k+1} \in S_{k+1}$, where $S_{k+1}$ is computed based on some binding $sq_j$ from $S_j$ for $1 \le j < k+1$.  Hence, when the quantifier is found, the corresponding value in the current binding is taken and used in the computation of $S_{k+1}$.  The computation performed depends on the type of $S_{k+1}$.  Take, for example,
	
	$$
	\phi_f \equiv \forall q \in S_x, \forall t \in \text{future}_{\Delta\tau}(q, \nu) : \psi(q, t).
	$$
	
	Then $S_2(q) = \text{future}_{\Delta\tau}(q, \nu)$, and so computation of $\text{future}_{\Delta\tau}$ is performed using the static-context definition (Equation \ref{eq-static-future}, and similarly for $\text{future}_{\tau}$).
	
	At this point, the reachability map $R$ is used; restricting attention to symbolic states of the SCFG, for some $sq_i$, $\text{future}_{\Delta\tau}(sq_i, \nu) = \{e : e \in R(sq_i) \cap E \text{ and } e \models \nu\}$ (the set of edges satisfying $\nu$ and reachable from $sq_i$).  Each binding in the computed set is then recursed on.
	
	\item If $B$ has length $k$, hence is a complete binding, add it to the binding space $\mathcal{B}$.
\end{itemize}

\begin{algorithm}

\begin{algorithmic}[1]
\Procedure{compute\_binding\_space}{($\forall q_1 \in S_1, \dots, \forall q_n \in S_n$), $\text{SCFG}_f$, $R$, $B$}

\If{$B$ is empty}

\State $S_1$ $\gets$ filter $V$ or $E$ according to the predicate in $S_1$
\State $B'$ $\gets$ empty list
\For{$B'' \in S_1$}
	\State $B' \gets B' \cup \text{compute\_binding\_space}(\forall..., \text{SCFG}_f, R, (B''))$
\EndFor

\Return B'
	
\ElsIf{length of $B$ is $k < n$}

\State next\_quantifier \_ $ \gets (\forall q_{k+1} \in S_{k+1})$
\State required\_quantifier $\gets (\forall q_j \in S_j)$ such that $S_{k+1}$ depends on $S_j$
\State current\_binding $\gets sq_j$ from $B$
\State $S_{k+1} \gets$ use $R$ to compute $S_{k+1}$
\State $B' \gets$ empty list
\For{$B'' \in S_{k+1}$}
	\State $B' \gets B' \cup \text{compute\_binding\_space}(\forall..., \text{SCFG}_f, R, B + B''))$
\EndFor

\Else

	\Return $\{B\}$
\EndIf

\EndProcedure
\end{algorithmic}

\algolisting{alg-construct-binding-space}{Recursive Construction of a Binding Space}
\end{algorithm}

Now, it remains to define how this process relates to monitoring (which Chapter \ref{chapter-cftl-monitoring} describes in detail).  For example, $\phi_f \equiv \forall q \in S : \psi(q)$ is a straightforward case and requires $|S|$ monitors; $S$ depends on no other quantification domain, so its only dependence is on the code of $f$ (though Section \ref{section-monitor-res} will address the problem of when to instantiate new monitors, update existing ones and do nothing).  A more complex example is $\phi_f \equiv \forall q_1 \in S_1, \dots, \forall q_n \in S_n : \psi(q_1, \dots, q_n)$, where the number of monitors to be instantiated is not straightforward.  As mentioned before, one cannot simply take the product of the quantification domains and say that $\Pi_{i=1}^{n}|S_i|$ monitors are required.  In fact, this makes no sense since $S_j$ ($j > 1$) must require a binding from $S_{j-1}$ for their computation (hence, they possibly change, along with their cardinalities, when the binding on which they depend changes).  This collection of problems can be seen as being part of one single problem: monitor resolution.  While this problem is solved by the monitoring algorithm described in Chapter \ref{chapter-cftl-monitoring}, the solution requires some theory to be set up in the static context.  Hence, the remainder of this section considers the problem of:

Given a point in the program that has been instrumented, to which monitors is the data sent?  This question also considers whether new monitors must be instantiated.

\section{Monitor Resolution}\label{section-monitor-res}

Suppose the binding $B = (\alpha_1, \dots, \alpha_n)$ has been computed statically, that is, $\alpha_1, \dots, \alpha_n$ are either symbolic states or edges of $\text{SCFG}_f$.  Suppose further that the formula in question is $\phi_f \equiv \forall q_1 \in S_1, \dots, \forall q_n \in S_n : \psi(q_1, \dots, q_n)$.  The problem of monitor resolution is, given a set of points that must be instrumented when given both the binding $B$ and the subformula $\psi(q_1, \dots, q_n)$, for each of those points, what should be the behaviour regarding monitor instantiation or update when data is received from the instrumented points at runtime?

To this end, recall that $I(\psi, B)$ denotes the set of instrumentation points required to decide $\psi(q_1, \dots, q_n)$ given the binding $B$ (this can be computed using the techniques discussed so far).  Then, for $\text{SCFG}_f = \langle V, E, v_s, V_e \rangle$,  $I(\psi, B) \subset V \cup E$, that is, the set of instrumentation points can be said to form a subgraph\footnote{Technically one then has to form edges in this subgraph by allowing an edge to exist iff there is a path between vertices/edges in $\text{SCFG}_f$.} of $\text{SCFG}_f$.

Since $I(\psi, B) \subset V \cup E$, one can form a natural ordering on $I(\psi, B)$ with respect to paths in $\text{SCFG}_f$.  This ordering is the same as that described in Section \ref{section-static-operators}: for $u, v \in I(\psi, B)$, $u$ is \textit{before} $v$ in $I(\psi, B)$ $\iff$ $u$ is \textit{before} $v$ in $\text{SCFG}_f$.  The monitoring algorithm that Chapter \ref{chapter-cftl-monitoring} presents uses this notion of a partial order on SCFGs heavily.

\chapter{Monitoring for CFTL Formulas}\label{chapter-cftl-monitoring}

This chapter of the report will discuss the approach taken to check the truth value of formulas $\phi_f$ in CFTL at runtime, assuming the instrumentation problem described in Chapter \ref{chapter-instrumentation} has been solved.

Consider again the partial order on SCFGs discussed in Chapter \ref{chapter-instrumentation} and recall that $I(\psi, B)$ is the set of instrumentation points computed using the atoms in $A(\psi)$ and the binding $B$.  Initially considering formulas to be those that are single-quantified, this partial ordering lifts to $I(\psi, B)$ to allow one to write the first rule about monitor resolution for singly-quantified formulas:

\begin{displayquote}
An element $v \in I(\psi, B)$ can trigger instantiation of a new monitor iff it is minimal with respect to the \textit{before} ordering on $I(\psi, B)$.
\end{displayquote}

The intuition behind this is that minimal elements with respect to \textit{before} will be observed first, so should be able to trigger instantiation of a new monitor.  Consider the formula

$$\phi_f \equiv \forall q \in S_x : q(x) = 10 \implies \text{next}_{\tau}(q, y)(y) = 15,$$

where $\text{next}_{\tau}(q, y)$ is an abuse of notation to mean ``the next state that changes $y$''.  Then,

$$I(q(x) = 10 \implies \text{next}_{\tau}(q, y)(y) = 15, B)$$

for some static binding $B$ is a set with a natural partial ordering that insists that $q(x) = 10$ should be observed before $\text{next}_{\tau}(q, y)(y) = 15$; observing $\text{next}_{\tau}(q, y)(y) = 15$ should not result in instantiation of a new monitor since $q(x) = 10$ may never be observed (especially if the property is not being verified inside a loop).  This is the intuition behind using the partial ordering to determine the atoms whose observation may result in instantiation of a new monitor.

\section{Monitor Update vs Doing Nothing}

It remains to define what should happen when an atom is observed that is not the minimal in a partial ordering, but with which something can still be done with respect to existing monitors.  With this in mind, consider the code:

\begin{lstlisting}[numbers=left]
function f()
  a = 10
  for i = 0 to 9
    g(i)
\end{lstlisting}

and consider verification with respect to the property

$$\phi_f \equiv \forall q \in S_a : q(a) = 10 \implies d(\text{next}_{\Delta\tau}(q, g)) \in [0, 10].$$

Observing $a = 10$ will make the atom $q(a) = 10$ true, then if the call to $g$ takes time within $[0, 10]$ units, the property is satisfied (there is only one instruction that changes $a$).  But then the loop carries on for another 9 iterations and the instrument placed at the call to $g$ (using the current method) records another 9 calls to $g$.  The question is then: what is done with these remaining occurrences, if the monitor instantiated to deal with \lstinline{a = 10} has reached a verdict after the first iteration of the loop?

This is a case in which nothing is done; the formula asks for the \textit{next} occurrence of a call to $g$, and this was taken into account, leading the monitor to reach a verdict.  Naturally, if there is no further occurrence of \lstinline{a = 10}, no further monitors should be instantiated.

There are, therefore, two cases addressed here:

\begin{itemize}
	\item The first call to $g$ is in a context where there is a monitor (instantiated by \lstinline{a = 10}, since that instruction is a minimal element with respect to the partial order on $I(q(a) = 10 \implies d(\text{next}_{\Delta\tau}(q, g)) \in [0, 10], B)$ and so can trigger instantiation of a new monitor) which has not yet received information about a call to $g$.  This means observation of the call to $g$ is taken into account by the existing monitor.  This is a \textit{monitor update}.
	
	\item The subsequent calls to $g$ are in a context where there is only one monitor and it has reached a verdict.  They must, therefore, \textit{do nothing} for verification.
\end{itemize}

Consider now a modified version of the code (where, also, $i$ is added to $a$ to be sure that its value changes and the execution of the instruction in every iteration is included in the quantification domain):

\begin{lstlisting}[numbers=left]
function f()
  for i = 0 to 9
    a = 10 + i
    g(i)
\end{lstlisting}

Here, the monitoring story is different.  For each loop iteration, an instruction is executed which is minimal with respect to the partial order of the set of instrumentation points.  It is therefore allowed to instantiate a new monitor.  This leads to a monitor being instantiated for each iteration of the loop; there are 10 instances of verification (since $S_x$ in the formula being monitored refers to concrete states generated at runtime), and so each subsequent call to $g$ also contributes towards reaching a monitor verdict (each call contributes to the verdict of a different monitor).

\section{Monitoring Multiply-Quantified Formulas}

Consider again the code

\begin{lstlisting}[numbers=left]
function f()
  a = 10
  for i = 0 to 9
    g(i)
\end{lstlisting}

with, instead, the property

\begin{equation}\label{equation-monitoring-pnf}
\phi_f \equiv \forall q \in S_a , \forall t \in \text{future}_{\Delta\tau}(q, g): q(a) = 10 \implies d(t) \in [0, 10].
\end{equation}

Notice that this can be rewritten out of prenex normal form:

$$\phi_f \equiv \forall q \in S_a : q(a) = 10 \implies \forall t \in \text{future}_{\Delta\tau}(q, g): d(t) \in [0, 10].$$

So this monitor expresses the property ``If $a$ is changed to equal 10, then all future calls to $g$ should take time within $[0, 10]$ units''.  By using Equation \ref{equation-monitoring-pnf}, monitoring can be performed by first computing the pair $(q, t)$ for some concrete state $q$ and some transition $t$ (so in the context of runtime) and instantiating a monitor specifically for that pair.  Now, let $S_i^* = S_i$ such that $S_i$ has the greatest cardinality of all the versions of $S_i$ given the binding from the previous quantification domain on which it might depend.  Then, monitoring the formula with multiple quantification

$$\phi_f \equiv \forall q_1 \in S_1, \dots, \forall q_n \in S_n : \psi(q_1, \dots, q_n)$$

requires $O(\Pi_{i=1}^{n}|S_i^*|)$ monitors.  This bound, however, can only be made precise once the runtime has finished; some monitors may only be instantiated because of loops so, just from this, the size of the quantification domains cannot be statically determined (since one cannot guarantee that a loop's number of iterations can be statically bounded).

\subsection{Multiple Quantification as a Product}

Consider again the multiply-quantified formula in Equation \ref{equation-monitoring-pnf}.  The approach taken in this report to monitoring for this formula (and that ties in well with the instrumentation strategy for multiply-quantified formulas given in Chapter \ref{chapter-instrumentation}) is to take the sequence of quantifiers and the bindings they generate ($(q, t)$ in this case) and consider these bindings as members of some space with the same dimension as the number of quantifiers.  This space draws similarities with the Binding Space (Definition \ref{def-binding-space}); the binding space can be seen as a \textit{symbolic support} for this space of tuples $(q, t)$ (a familiar concept in the relationship between the static and runtime contexts).  For example, in the space of tuples $(q, t)$ (that is, the space of pairs of concrete states and transitions), each $(q, t)$ is generated by some static tuple in the binding space generated during instrumentation.

This space of tuples of concrete states/transitions is essentially the product of the quantification domains $S_a$ and $\text{future}_{\Delta\tau}(q, g)$, but contains more structure since computation of $\text{future}_{\Delta\tau}(q, g)$ requires a binding from $S_a$.  This means the standard definition of the Cartesian Product (for sets $A$ and $B$ where the contents of each do not depend on the other set, $A \times B = \{(a, b) : a \in A, b \in B\}$) does not work.  For example, when one constructs a product $A \times B$, one approach could be to fix some $a \in A$ and generate pairs by iterating through the elements $b \in B$.  Fixing the next $a \in A$, the same elements of $B$ would be iterated over.  In the context of quantification domains and their products, the second $a \in A$ would require iteration over a different version of $B$ since $B$ now depends on an element of $A$.

Nevertheless, once such a product space is computed, it may be traversed.  Hence, monitoring the formula

$$
\phi_f \equiv \forall q \in S_a , \forall t \in \text{future}_{\Delta\tau}(q, g): q(a) = 10 \implies d(t) \in [0, 10]
$$

turns into monitoring the formula

$$
\phi_f \equiv \forall (q, t) \in P: q(a) = 10 \implies d(t) \in [0, 10],
$$

where $P$ is the product space of the quantification domains $S_a$ and $\text{future}_{\Delta\tau}(q, g)$.  From this, a monitoring algorithm begins to take shape; the space $P$ can be seen as the space generated at runtime by the binding space computed statically.  Hence, going by the same rationale as for monitoring singly-quantified formulas, each tuple in the product space of the quantification domains can correspond to a separate monitor and, for $\phi_f$ to hold, all of these monitors must collapse to $\top$.  The complication appears when one attempts to define the rules for when monitors may be instantiated.

Consider the rule for the single-quantified case, applied to multiple-quantification:

\begin{displayquote}
An element $v \in I(\psi, B)$ can trigger instantiation of a new monitor iff it is minimal with respect to the \textit{before} ordering on $I(\psi, B)$.
\end{displayquote}

Consider, also, the simple program

\begin{lstlisting}
a = 20
for i = 0 to 5
  g(i)
\end{lstlisting}

and the property

$$
\phi_f \equiv \forall q \in S_a, \forall t \in \text{future}_{\Delta\tau}(q, g) : q(a) = 20 \implies d(t) \in [0, 10].
$$

The monitoring story is then:

\begin{enumerate}
	\item \lstinline{a = 20} is observed, and is a minimal element in $I(\psi, B)$, so triggers the instantiation of a monitor whose verdict is ? because it has not observed enough to collapse $q(a) = 20 \implies d(t) \in [0, 10]$ to any truth value.
	
	\item \lstinline{g(0)} is observed, is not a minimal element in $I(\psi, B)$, but there is an existing monitor that has not yet observed this data.  The observed data is therefore sent to that monitor, which collapses to a truth value.
	
	\item \lstinline{g(1)} is observed.  The property asserts that \textit{every future call to $g$} must satisfy this property, and yet the instrumentation point at $g$ is not minimal in $I(\psi, B)$, so cannot instantiate a new monitor; there is a problem.
\end{enumerate}

From this it is clear that, for multiple quantification, the minimality condition does not suffice.  Instead, one must partition the instrumentation set $I(\psi, B)$ into sets of instrumentation points derived from each bind variable.  Therefore, with respect to the quantification sequence $\forall q_1 \in S_1, \dots, \forall q_n \in S_n$, a partition $I_1, \dots, I_k \subset I(\psi, B)$ is a family of sets such that:

$$
\forall I_i \; \exists sq_j \; \forall p \in I_i : p \text{ is derived from $sq_j$},
$$

and the usual conditions for partitions of sets hold:

\begin{itemize}
\item The $I_i$ are pairwise disjoint, that is, $\forall 1 \le i, j \le k$ with $i \neq j$, $I_i \cap I_j = \emptyset$, and
\item $\bigcup_{i=1}^{k}I_i = I(\psi, B)$.
\end{itemize}

Now, given that a set exists in the family for each $S_i$, consider the partial order on $I(\psi, B)$ again, but restricted to each $I_i$ in the partition family.  Then a new condition may be written:

\begin{displayquote}
An element $v \in I(\psi, B)$ can trigger instantiation of a new monitor iff it is minimal with respect to the \textit{before} ordering on $I_j \subset I(\psi, B)$.
\end{displayquote}

The partial ordering induced on $I_j$ by that on $I(\psi, B)$ is the expected one.  Before attempting to monitor the code above again, it is necessary to define one final structure: a map from bindings in the binding space (hence, static bindings) to sets of \textit{configurations}, that is, other maps that describe the truth value of every atom in a monitor when it is finally collapsed to a truth value.  The purpose of such a structure is as follows:

Consider a binding at runtime $(q_1, \dots, q_n)$ in the product space of quantification domains, for which a monitor has been instantiated and has reached a verdict (so collapsed to a truth value).  Then, consider a new binding $(q_1, \dots, q'_n)$ where $q'_n$ is the next element from the final quantification domain $S_n$.  Assuming that monitors are deleted when they are collapsed to a truth value, to instantiate a monitor for the binding $(q_1, \dots, q'_n)$ then requires data that has been observed and thrown away, namely the variables $q_1, \dots, q_{n-1}$ from the binding.  Hence, in order to instantiate a monitor for $(q_1, \dots, q'_n)$, knowledge of $q_1, \dots, q_{n-1}$ (and any implicit values derived for future-time operators in the formula) is required.  A solution to this is to set up a map of configurations.

Let $\textsc{Conf}(M_\psi) : A(\psi) \to \{\top, \bot, ?\}$ be a configuration that, for a monitor $M_\psi$, gives the observed truth values of the atoms in the formula $\psi$.  A map of configurations is a map

$$M : \mathcal{B} \to \{\textsc{Conf}(M) : M \text{ is a collapsed monitor for $\psi$}\}$$

that sends static bindings to the set of configurations reached by monitors (for that binding) that were collapsed.  This acts as storage for monitor states; whenever a monitor is collapsed to a truth value during the monitoring process, its final state is stored in this map.  Using this map, when an instrumentation point is executed during runtime that is minimal in the set in the partition to which it belongs, a new monitor is instantiated for each previous monitor configuration found in this map for the relevant binding.

Using these new mechanisms, monitoring is attempted again on the code

\begin{lstlisting}
a = 20
for i = 0 to 5
  g(i)
\end{lstlisting}

with respect to the property

$$
\phi_f \equiv \forall q \in S_a, \forall t \in \text{future}_{\Delta\tau}(q, g) : q(a) = 20 \implies d(t) \in [0, 10].
$$

The new monitoring story is:

\begin{enumerate}
	\item \lstinline{a = 20} is observed, and triggers a new monitor since it is minimal in the set of instrumentation points derived from $q \in S_a$.
	\item \lstinline{g(0)} is observed and takes less than 10 units of time, and a monitor already exists, hence is updated and is collapsed to a truth value.  There are no more monitors left, but the map $M$ now has
	
	$$M((\text{\lstinline{a = 20}, \lstinline{g(i)}})) = \{[q(a) = 20 \mapsto \top, d(t) \in [0, 10] \mapsto \top]\}.$$
	
	\item \lstinline{g(1)} is observed and takes less than 10 units of time, but no monitors exist.  However, \lstinline{g(i)} is minimal in the set of instrumentation points derived from $t \in \text{future}_{\Delta\tau}(q, g)$, so the map $M$ is used; a new monitor is instantiated for every configuration found.  Given that the static binding to which \lstinline{g(1)} belongs is $(\lstinline{a = 20}, \lstinline{g(i)})$, this is given as input to $M$ and the result is the set
	
	$$\{[q(a) = 20 \mapsto \top, d(t) \in [0, 10] \mapsto \top]\}.$$
	
	Now, a monitor is instantiated for every member using every value except that which has just been observed (since only the \textit{old} values are required).  Hence, $q(a) = 20 \mapsto \top$ is used, but $d(t) \in [0, 10] \mapsto \top$ is not since a new value for this has just been observed.
	
	\item This process repeats while more calls to \lstinline{g(i)} are observed.
\end{enumerate}

From this, one can see that the idea of storing old configurations of monitors is a way to remember the results of monitors for different bindings in the product of the quantification domains.  For example, considering again the example $(q_1, \dots, q_n)$ with the second binding $(q_1, \dots, q'_n)$; the use of the map $M$ allows some backtracking in the form of recovering monitor states as if the $q_1, \dots, q_{n-1}$ were to be observed again.

\subsection{A Monitoring Algorithm}

Monitoring multiply-quantified formulas (of which singly-quantified formulas are a special case) is performed in this work by application of a set of rules, all of which have now been developed.  They are presented here for clarity.  Suppose some data is received from an instrument, and that the binding to which this instrument belongs has been computed (this is straightforward; one simply maintains a map when placing the instruments).  Then monitoring is performed by applying the following:

\begin{itemize}
\item If the data observed is from an instrument that is minimal with respect to the partial order on the partition set to which it belongs,
\begin{itemize}
	\item If there are no existing monitors for this binding, and the bind variable to which this data corresponds is the first one, instantiate a new monitor.
	\item If there are no existing monitors for this binding, and the bind variable to which this data corresponds is not the first one, instantiate new monitors for every configuration found in the map $M$.  To select from $M$, suppose the bind variable to which this data corresponds is the $k^{\text{th}}$.  Then, find all bindings for which configurations are stored whose first $k-1$ bind variables match the first $k-1$ bind variables of the current binding.
	
	For each of these configurations, instantiate a new monitor and update the monitor with the data from that configuration, leaving out the atom that has just been observed.  Once this update is complete, update the monitor with the atom based on the newly observed data.
	
	\item If there are existing monitors for this binding, and the bind variable to which this data corresponds is the first one, instantiate a new monitor.
	
	\item If there are existing monitors for this binding, and the bind variable to which this data corresponds is not the first one, update existing monitor states.
\end{itemize}

\item If the data observed is from an instrument that is not minimal with respect to the partial order on the partition set to which it belongs, update existing monitors.

\end{itemize}

\chapter{Performance Evaluation}\label{chapter-performance}

In this chapter, the performance of the verification system that has been developed for CFTL will be analysed in the online monitoring setting.

The verification system developed is asynchronous in that the process of checking for satisfaction of a property does not block the program under scrutiny.  It should also be noted that, currently, the verification is done only for the purpose of analysis; no automatic adaptation is performed on the verified program's trajectory to avoid future violations.  In addition, since the instrumentation is performed statically, its performance will not be analysed.  This is because it is seen as amortised, and so the overhead that is of interest in this chapter is not affected by it.

Finally, the verification system analysed in this chapter is written in Python, for verification of Python programs.  This is because the CMS Collaboration favours the use of Python, hence a lot of the systems that will likely be verified will be written in it.  Further, Python provides many introspection features that make instrumentation more straightforward.

\section{Experimental Setup}

The experiments in this chapter were performed by setting up two threads; one program thread (in which the instrumented version of an input program is run) and one monitoring thread (to which instruments in the program thread send their data for verification).

The plots were generated by modifying the verification code to add \textit{timing points} to an SQLite file, which was then queried for plotting.

The verification system presented for CFTL takes as input the program, and the property for which the program is checked.  It instruments the program according to the theory in Chapter \ref{chapter-instrumentation} and then monitors it at runtime according to the theory in Chapter \ref{chapter-cftl-monitoring}.  The details of implementation will be described in a future paper.

The programs verified in this chapter are artificial examples written to mimic patterns found in CMS' Conditions release service \cite{Dawes}.  By verifying these programs, it is demonstrated that CFTL can indeed be used to write the properties for which it was designed.

The program examples considered cover the following cases:

\begin{itemize}
	\item Performing a database operation on a database specified by some variable, and then closing the database.  The database operations are subject to time constraints, but only if the database given is of a certain type.
	
	\item Performing a sequence of database operations on a database specified.  Each operation in the sequence is subject to a constraint.
	
	\item Checking of a lock that controls data upload to single source.  If the credentials of the uploader are authenticated, then acquiring a new lock on the shared resource constitutes a series of database queries, each of which must be subject to time constraints.
\end{itemize}

\subsection{Verifying Representative Programs}

Each program representative of the cases listed above will now be presented, along with analysis of its verification results.  The analyses will consist of presentation of plots derived from the verification tool; these plots give good insight into how the implemented verification tool works.

\subsubsection{Database Operation followed by Closure}

\begin{figure}[ht]
\begin{lstlisting}
database = 1
database_operation(database)
close_connection(database)
\end{lstlisting}
\caption{\label{fig-code-example-1}Code that sets a database type, calls a function \lstinline{database_operation}, and then closes the database connection with \lstinline{close_connection}.}
\end{figure}

Figure \ref{fig-code-example-1} is a simple code snippet, but serves to demonstrate verification of an important property:

\begin{equation}\label{eq-code-example-1-property}
\begin{split}
\phi \equiv \forall q \in S_d : q(\text{database}) = 1 \implies & \; (d(\text{next}_{\Delta\tau}(q, \text{database\_operation})) \in [0, 2]\\
& \; \land d(\text{next}_{\Delta\tau}(q, \text{close\_connection})) \in [0, 1])
\end{split}
\end{equation}

\begin{figure}[ht]
\centering
\includegraphics[width=0.8\linewidth]{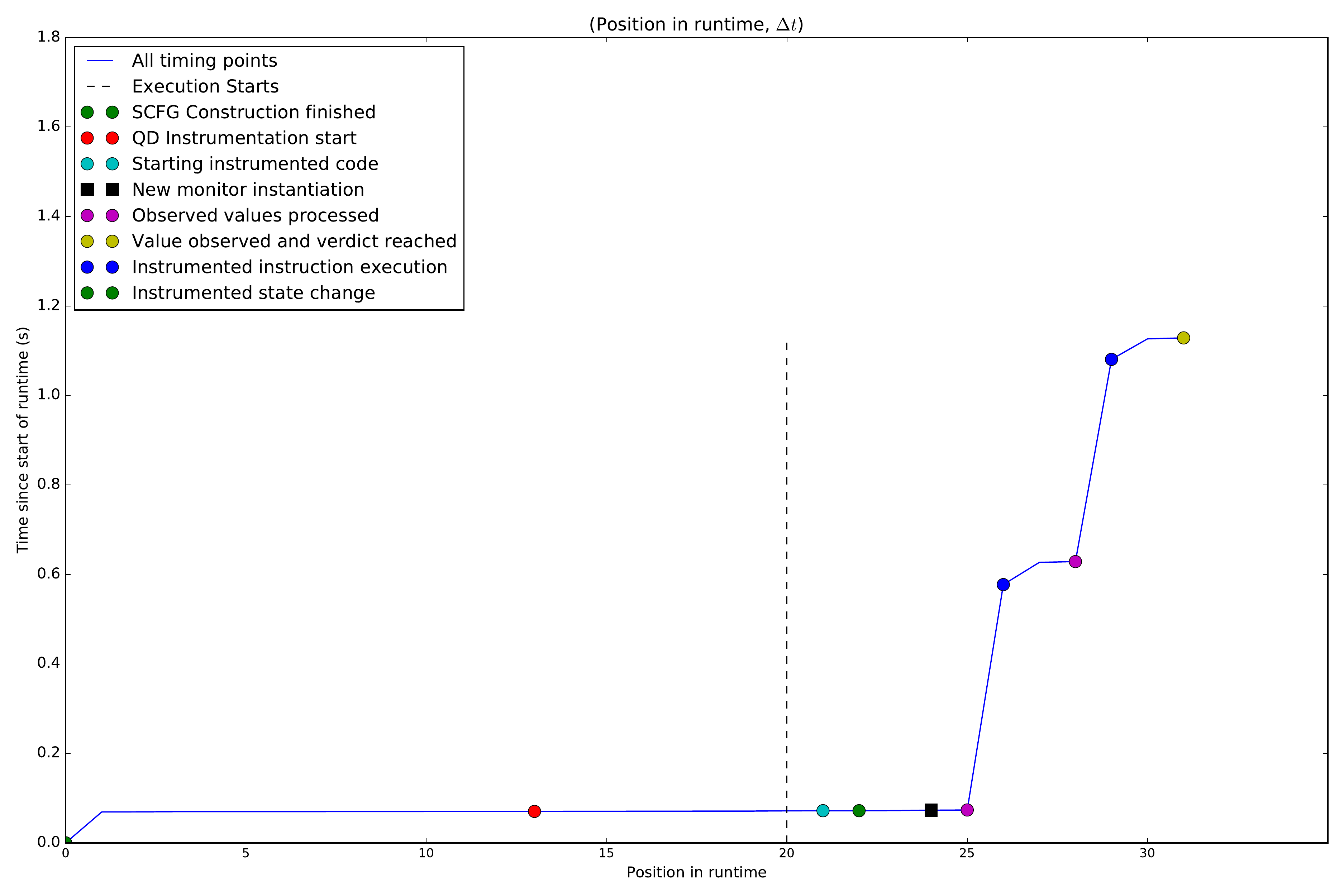}
\caption{\label{fig-code-example-1-plot}Plot showing the progress of verification, with key points in the process followed by the verification tool highlighted.}
\end{figure}

Figure \ref{fig-code-example-1-plot} shows the progress of the verification tool when run on the code in Figure \ref{fig-code-example-1} with the property in Equation \ref{eq-code-example-1-property}.  The vertical dotted line denotes the end of static instrumentation and the beginning of execution of the instrumented code.  The other markings on the plot are as follows:

\begin{itemize}
\item The green circle at the origin denotes the end of the construction of the input program's SCFG.

\item Red circles denote instrumentation of the code with respect to an element of the static binding space.  Since the property in Equation \ref{eq-code-example-1-property} is singly-quantified, the binding space is one-dimensional.  Further, there is only one red circle since there is only one change of the variable \lstinline{database} found by static analysis.

\item The blue circle entitled \textit{Starting instrumented code} denotes the start of the instrumented code in the program thread.

\item The green circle between positions 20 and 25 in the runtime denotes execution of an instrumented state change, ie, a statement has been executed that changed a state that is of interest to the property in Equation \ref{eq-code-example-1-property}.

\item The black square denotes instantiation of a new monitor.  In this case, the state change triggered previously corresponds to an instrumentation point that is minimal in its instrumentation point set, so can trigger instantiation a new monitor; the black square denotes such an instantiation.

\item The purple circle denotes that the observed value (from the state change) has been processed by the monitor and its state has changed (due to observation of an atom).

\item The blue circle, after the jump (which corresponds to a call of the function \lstinline{database_operation}), denotes receipt of data from the instrument around the function call.

\item The next purple circle denotes a change in the state of the existing monitor; the instrument around the function call corresponds to an instrumentation point that is not minimal in an instrumentation set, so it may only update existing monitors.

\item The final blue circle denotes the second call to the function \lstinline{close_connection}.

\item Finally, enough information has been received for the single monitor involved in verifying this property to reach a verdict (the monitor is collapsed to a single truth value).
\end{itemize}

\subsubsection{Sequence of Database Operations}

\begin{figure}[ht]
\begin{lstlisting}
database = 1
data = [1.1, 1.3, 1.6]
for datum in data:
    operation(datum)
\end{lstlisting}
\caption{\label{fig-code-example-2}Code that, for a given database, iterates through some data and calls a function called \lstinline{operation}.}
\end{figure}

Figure \ref{fig-code-example-2} is a more complex code snippet, and serves as an example of the use of unbounded future time operators in CFTL, in the form of the property:

\begin{equation}\label{eq-code-example-2-property}
\phi \equiv \forall q \in S_d, \forall t \in \text{future}_{\Delta\tau}(q, \text{operation}) : q(\text{database}) = 1 \implies d(t) \in [0, 1].
\end{equation}

\begin{figure}[ht]
\centering
\includegraphics[width=0.8\linewidth]{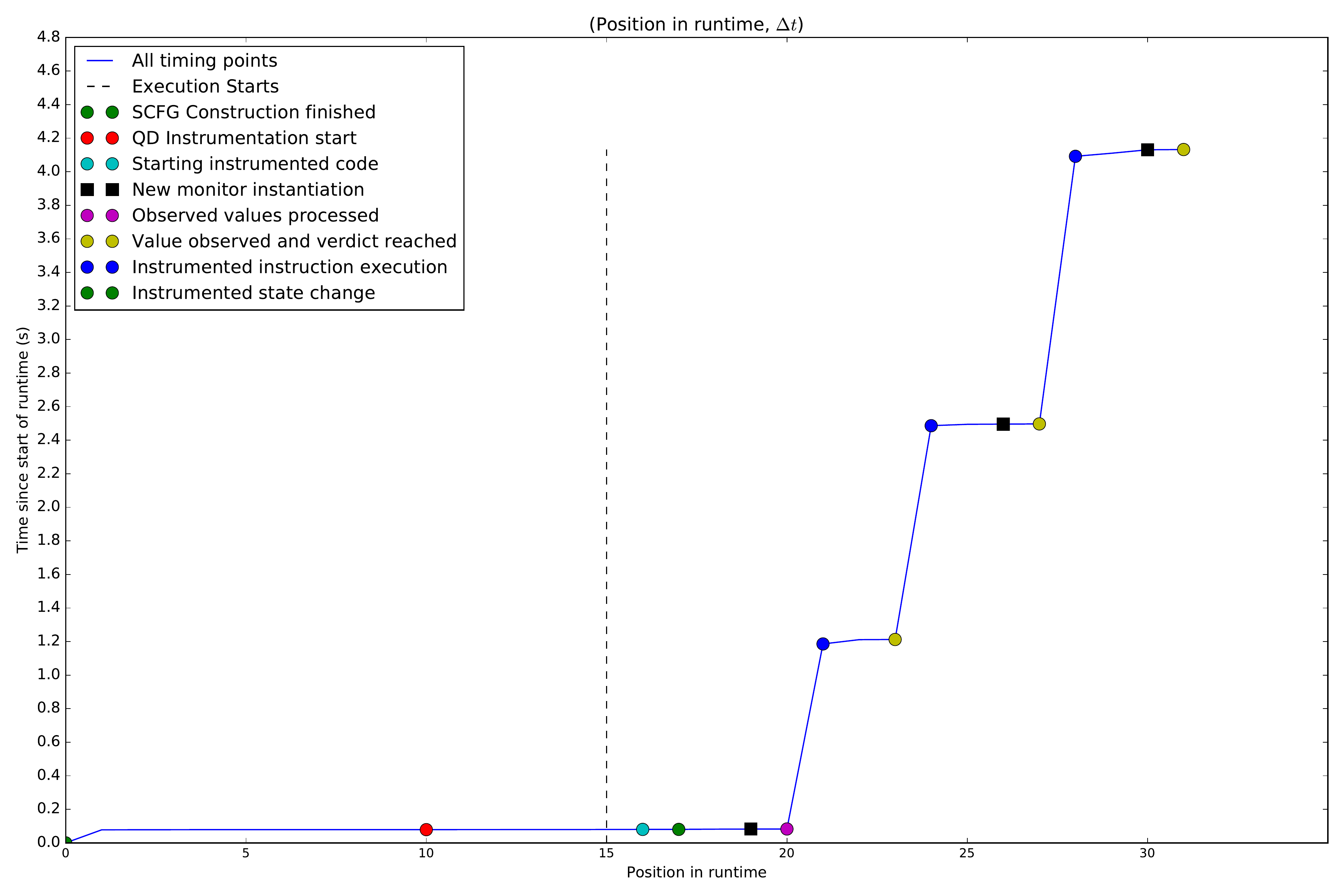}
\caption{\label{fig-code-example-2-plot}Plot showing the progress of verification, with key points in the process followed by the verification tool highlighted.}
\end{figure}

Figure \ref{fig-code-example-2-plot} shows the progress of the verification tool when run on the code in Figure \ref{fig-code-example-2} with the property in Equation \ref{eq-code-example-2-property}.  The meanings of the markings on the plot are all consistent with those in Figure \ref{fig-code-example-1-plot}.

A distinguishing feature of this plot is the greater number of monitors (black square) being instantiated; the property quantifies over both states and transitions in the future, hence every call to \lstinline{operation} in the for-loop in Figure \ref{fig-code-example-2} must contribute to the verification of the property in Equation \ref{eq-code-example-2-property}.  The first black square denotes instantiation of the first monitor when \lstinline{database = 1} is executed.  After this, \lstinline{operation(1.1)} is received by the monitoring thread with a monitor already existing, so the monitor is updated and reaches a verdict.

The second call, \lstinline{operation(1.3)}, is received and corresponds to an instrumentation point that is minimal in the instrumentation set for the bind variable $t$, hence is allowed to trigger instantiation of new monitors.  It does precisely this, using the configuration stored from the old (fully-collapsed) monitor to instantiate a new monitor, hence the second black square after the execution of the second function call.  The third follows in a similar fashion.  Note that the implementation of \lstinline{operation} in this artificial example is simply to delay by the amount of time given in the argument, leading to the increasingly large jumps in Figure \ref{fig-code-example-2-plot}.

\subsubsection{Shared Resource Control}

\begin{figure}[ht]
\begin{lstlisting}
authenticated=1
if authenticated:
    existing_locks = query("get locks", [])
    if len(existing_locks) > 0:
    	# do nothing
    	print("LOCKS EXIST")
        pass
    else:
    	print("NO LOCKS EXIST")
        lock = new_lock()
        query("write", [lock])
else:
    # do nothing
    pass
\end{lstlisting}
\caption{\label{fig-code-example-3}Code that controls access to a shared resource.}
\end{figure}

\begin{equation}\label{eq-code-example-3-property}
\phi \equiv \forall q \in S_a, \forall t \in \text{future}_{\Delta\tau}(q, \text{query}) : q(\text{authenticated}) = 1 \implies d(t) \in [0, 1].
\end{equation}

The property in Equation \ref{eq-code-example-3-property} is immediately similar to that in Equation \ref{eq-code-example-2-property} in that it requires quantification over unbounded future time.

\begin{figure}[ht]
\centering
\includegraphics[width=0.8\linewidth]{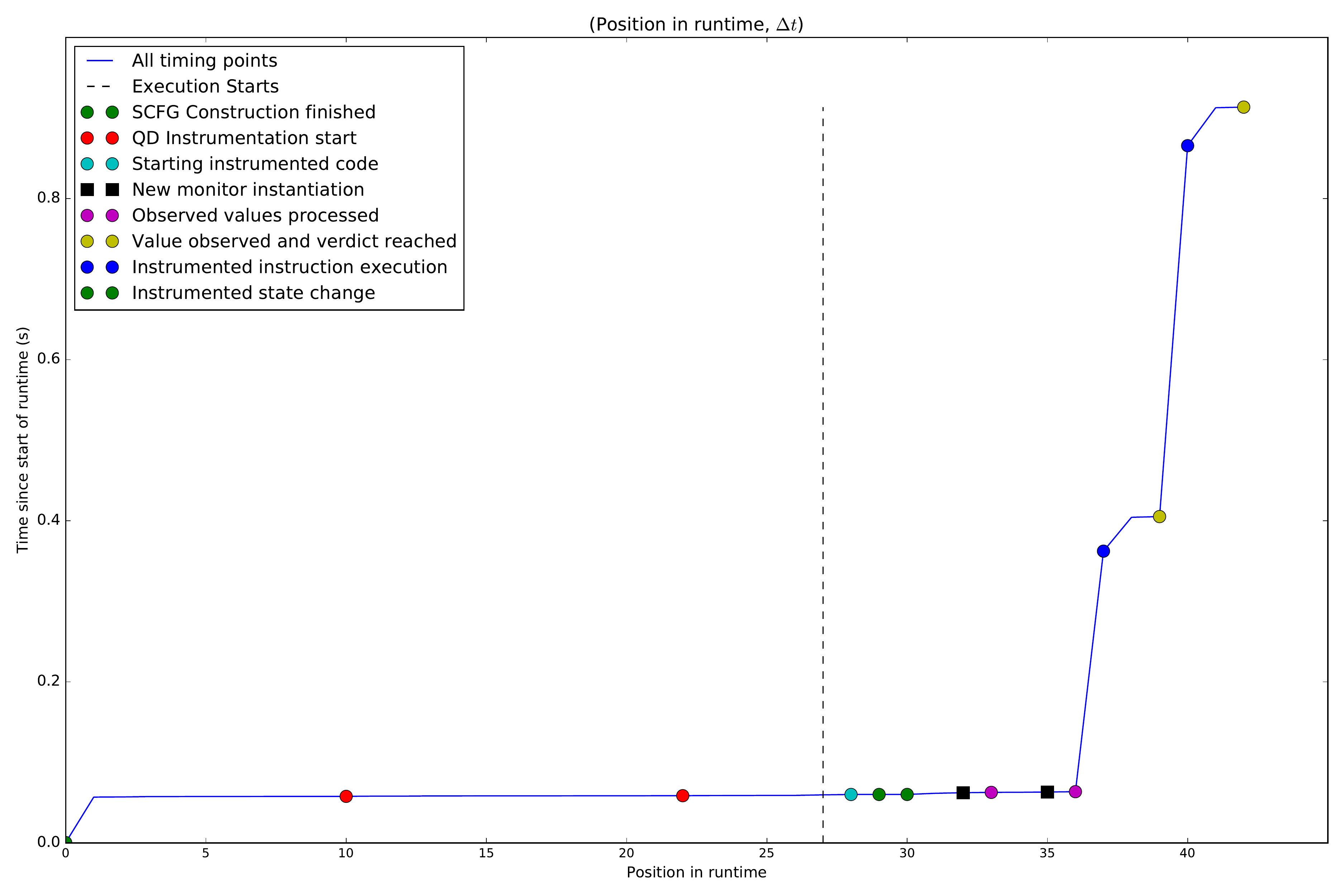}
\caption{\label{fig-code-example-3-plot}Plot showing the progress of verification, with key points in the process followed by the verification tool highlighted.}
\end{figure}

The difference here is that more pressure is placed on the SCFG computed for this code; luckily the presence of the \lstinline{pass} statement in the body of the else-clause on the top level prevents branching from being modelled in the SCFG.  For example, if the second branch does nothing, then it will never be explored and so there is no point in modelling this in the SCFG.

\chapter{Context}\label{context}

This chapter describes the positioning of this work in the existing research.  The chapter opens in Section \ref{section-new-logic} by examining the differences between the new logic introduced for description of state and time constraints, and existing timed temporal logics.

Both the semantics of the new logic and the method of instrumentation given in Chapter \ref{chapter-instrumentation} rely on a variant of the traditional Control Flow Graph, and so the chapter is continued by Section \ref{section-abstract-cfg} describing where this fits into literature from Symbolic Execution and Runtime Verification using Static Analysis.

The chapter concludes by briefly discussing the differences between the monitoring mechanism for CFTL and monitors required for other temporal logics.

\section{Control Flow Temporal Logic}\label{section-new-logic}

CFTL is designed specifically to describe properties regarding constraints over state changes and (mostly) function calls.  To this end, a run of a function being verified according to a property is modelled by a sequence of states with transitions between them.  States are \textit{instantaneous} descriptions of the data in a program; transitions are the computation that happens to move from one state to another, hence have \textit{duration} and are not \textit{instantaneous}.

In this state and transition-based model, function calls become types of transitions.  In the semantics of CFTL, it makes sense to define constraints over transitions in general, hence constraints on function calls, despite being the most common use, are not the only constraints that can be placed.

Other logics that deal with time constraints, such as Metric Temporal Logic \cite{Thati2004} (Section \ref{subsection-mtl}), Timed Linear Temporal Logic \cite{Bauer} (Section \ref{subsection-tltl}) and Real Time Logic \cite{Jahanian} (Section \ref{subsection-rtl}) are defined at a level of abstraction that means using them to express the properties expressible using the work in this report would require an additional layer to serve as an adapter.

Additionally, CFTL is coupled tightly with the control flow of the code over which a formula is verified.  For example, its future-time operators (an example being $\text{next}_{\Delta\tau}$ in Equation \ref{eq-example-property}) are defined in terms of a graph extracted statically from the code under scrutiny.  Given this tight coupling with the control flow of a program, built into CFTL is the ability to not just describe relationships between atoms (eg, $p \implies q$ with no regard for the things to which $p$ and $q$ correspond), but also to describe from where they are derived.  In contrast, the existing logics examined in this chapter disregard the problem of deciding from where in a program/when in its runtime information can be taken.

An example of the combination of a state/transition model with tight coupling with the control flow is now presented.  Consider the property written in English:

\begin{displayquote}
	\textit{Every time the value to which a variable $a$ is mapped is changed, if the new value is in $K$, then the next call to the function $f$ should take an amount of time in $J$.}
\end{displayquote}

This can be written in CFTL as

\begin{equation}\label{eq-example-property}
\forall s \in S_a : s(a) \in K \implies d(\text{next}_{\Delta\tau}(s, f)) \in J.
\end{equation}

This is read as ``for every state $s$ that changes $a$, $s(a) \in K$ in that state implies that the next call to $f$ after the state $s$ is attained must take an amount of time in $J$''.

Now, in terms of expressing the same property in an existing temporal logic, this is more difficult.  One has to introduce a layer of abstraction and, sometimes, deal with lack of semantics based on transitions.  Additionally, traces are restricted to a single scope; if nested calls are captured in a new scope, the property being checked ceases to be regular and cannot be expressed in some of the logics in this chapter.  The implication of this is that, if a state describes a function call beginning, the next state is necessarily the function call ending.

\subsection{Metric Temporal Logic}\label{subsection-mtl}

Metric Temporal Logic\cite{Thati2004, Koymans1990} has a semantics defined in terms of a finite timed sequence of states $(\pi, \tau)$ where: $\pi = (\pi_1, \pi_2, \dots, \pi_n)$ where each $\pi_i \in 2^{AP}$ for a finite set of propositional atoms $AP$; and $\tau \in \mathbb{R}^n$.  Hence, \textit{states} in the context of MTL are sets of propositional atoms with associated timestamps.  Given the semantics in \cite{Thati2004} and the sequence of states $\pi$, one might choose to define the property in Equation \ref{eq-example-property} using MTL by

\begin{equation}\label{eq-example-tltl}
G (x_K \implies F_\infty (f_{\text{start}} \land X_J(f_{\text{end}}))).
\end{equation}

Informally (see \cite{Thati2004} for formal semantics of MTL operators), $G \phi$ asserts that $\phi$ always holds; $F_\infty \phi$ asserts that  $\phi$ \textit{eventually}\footnote{$F_I$ is represented in MTL by $\top U_I \phi_2$, which is read as ``$\top$ holds until $\phi_2$ holds, within the interval $I$''.} (unboundedly) holds; and $X_J \phi$ asserts that, in the next state, $\phi$ holds such that that the time difference between the next state and the current one is within $J$.

Extra atoms have been added (this is the \textit{adapter} discussed earlier): $x_K$ is true iff $x$ has changed and, as a result, $x \in K$ holds; $f_{\text{start}}$ is true iff a state is the latest before a call to $f$; and $f_{\text{end}}$  is true iff $f$ was just called.

Equation \ref{eq-example-tltl} and the surrounding work necessary to allow the expression of the desired property makes it clear that CFTL makes defining certain state and time constraints on program runs more straightforward.

\subsection{Timed Linear-Time Temporal Logic}\label{subsection-tltl}

Timed Linear-Time Temporal Logic \cite{Bauer} is a linear time logic defined on infinite \textit{timed words} (sequences $\sigma$ of $\sigma_i \in \Sigma \times \mathbb{R}_{\ge}$, hence $\sigma_i = (a_i, t_i)$ for some event $a_i$ in the finite alphabet $\Sigma$ and some positive real-numbered timestamp $t_i$).  The set of infinite timed words is often denoted by $T\Sigma^\omega$, where $\Sigma^\omega$ is the set of infinite words over the alphabet $\Sigma$.

Based on the timed word-based semantics, TLTL can be used to describe constraints such as ``$q$ should be true within 5 units of time of $p$ being true'' using its definition of clocks; the $\vartriangleleft_a$ and $\vartriangleright_a$ operators give the times since the previous and until the next occurrences of $a$, respectively.

Further, TLTL can mirror the quantification over states seen in Equation \ref{eq-example-property} by means of implication

$$\square(P \implies Q)$$

which can be read as ``everytime $P$ holds, $Q$ must also hold''.  TLTL does not, however, have a way to explicitly talk about transitions between its states.  Consider the eventuality that one wishes to place a time constraint over a function call:

The only way to talk about this in TLTL would be to force creation of states at either side of the transition and place a property between those.  For example,

$$\square(f_{\text{start}} \land \vartriangleright_{f_{\text{end}}} \in J).$$

Here, the invention of two atoms $f_{\text{start}}$ and $f_{\text{end}} \in \Sigma$ was again required in order for states $\sigma_i, \sigma_{i+1}$ to exist with $\sigma_i = f_{\text{start}}$ and $\sigma_{i+1} = f_{\text{end}}$.  The duration of the call of $f$ would then be modelled by $t_{i+1} - t_i$.  This could be extended to express the property in Equations \ref{eq-example-property} and \ref{eq-example-tltl} by writing

\begin{equation}\label{eq-example-mtl}
\square(x_K \implies F(f_{\text{start}} \land \vartriangleright_{f_{\text{end}}} \in J)))
\end{equation}

where $x_K$ is the same as in Equation \ref{eq-example-tltl}.  This can be read as ``whenever $x \in K$, eventually (unboundedly) we have a call to $f$ and then, within a length of time contained by $J$, $f$ returns''.

In the case of TLTL, it would have to be guaranteed by instrumentation that $f_\text{end}$ was in fact the matching return of the call that triggered the atom $f_\text{start}$ to be true.  Since nested calls are not considered, if $f_\text{start}$ holds in a state, the next state necessarily sees $f_\text{end}$ hold, hence $\vartriangleright_{f_\text{end}}$ gives the time until the next state is attained.  In this case, this is the function call duration.

It should be noted that TLTL does not have bounded versions of the standard temporal operators from untimed LTL, hence $F$ in this case is the same operator as $F_\infty$ in MTL.  After this is done, it is again clear that expressing certain time constraint properties in TLTL is not as intuitive as in the new logic presented.

\subsection{Real Time Logic}\label{subsection-rtl}

Real Time Logic \cite{Jahanian} is a first order logic with a timing element that takes a different approach than those of Timed Linear-Time Temporal Logic and Metric Temporal Logic.  Notably, its semantics are very different; they are based on event occurrences $(e, i, t)$ for an event $e$, the occurrence index $i$ and a timestamp $t$, where a \textit{computation} is a sequence of such event occurrences $\sigma = (\sigma_0, \sigma_1, \dots)$.  Defined on such a sequence are three rules, which assert that 1) time is non-decreasing, 2) at most one value of time may be associated with each event occurrence and 3) every occurrence with an index $i > 1$ has a preceding occurrence.

The distinguishing feature of RTL is the \textit{occurrence function}, $@ : E \times \mathbb{Z}^+ \to \mathbb{N}$, which is a map from pairs consisting of an event $e$ and an occurrence index $i$, to the natural-numbered time at which the occurrence is observed.  This is a key difference from the logic presented in this report; in RTL, one can explicitly obtain the time at which something happens.

However, both the way in which formulas in this logic express time constraints is less intuitive than in CFTL, and there is no way to discuss state.  This means the property in Equation \ref{eq-example-property} cannot be precisely written in RTL, for Equation \ref{eq-example-property} places no time constraint on the gap between the call to $f$ and the state change resulting in $x \in K$.

\subsection{Metric Dynamic Logic and Linear Dynamic Logic}\label{subsection-mdl-ldl}

Linear Dynamic Logic \cite{DeGiacomo} merges regular expressions with Linear Temporal Logic; one can place LTL formulas inside regular expressions and the LDL semantics turns satisfaction of the formula into matching of the sequence of states (sets of atomic propositions, as in the standard LTL semantics given in \cite{Bauer}) with the regular expression.  For example, the Kleene star can be used in place of the \textit{always} temporal operator: $\phi^*$ in LDL is equivalent to $\square(\phi)$ in LTL.

As an example, consider a property ``$\phi$ should hold in the future but, until it does, the states should match the pattern $\rho$''.  This can be written in LDL as follows:

\begin{equation}\label{eq-example-ldl}
\langle \rho \rangle \phi.
\end{equation}

LDL, however, has no notion of time; it simply mixes regular expressions with LTL and so the semantics are defined over sequences of states.  This way, ordering is of course preserved but there is no element of time.

Metric Dynamic Logic \cite{Basin} extends LDL to have a notion of time in a similar way to how Metric Temporal Logic extends Linear-Time Temporal Logic; an interval is placed on some of the operators: $\langle \rho \rangle_I \phi$ expresses the same property as Equation \ref{eq-example-ldl}, but with the additional constraint that $\phi$ should hold within the time interval $I$.  One might then express the property from Equation \ref{eq-example-property} as:

\begin{equation}\label{eq-example-mdl}
x_K \implies ( \langle \text{true}^* \rangle_\infty ( \langle f_{\text{start}} \rangle_J ( f_{\text{end}} ) ) )
\end{equation}

which can be read as ``if $x_K$ holds, then the state sequence from then on should match an arbitrary gap, an occurrence of $f_{\text{start}}$, and, finally, an occurrence of $f_{\text{end}}$ in time bounded by $J$'' with the slight abuse of notation: writing $\langle\rangle_\infty$ to mean $\langle \rangle_{[0, \infty)}$.  In the case of MDL, \textit{occurrence} means that the atomic proposition is true in a given state (the same semantics used by untimed LTL in \cite{Bauer}).  Again, applying regular expressions results in a logic that is not as intuitive as the logic presented for the types of properties of interest in this report.

\subsection{CARET and Recursive Automata}\label{subsection-caret}

CARET \cite{Alur} is a logic that has no timing element, but deals with the inherent inability to deal with nested function calls present in the other logics in this chapter.  It semantics differs from the other logics discussed so far, and the semantics of CFTL, in that a \textit{structured computation} is used as the structure, rather than the typical sequence of states/times.  \cite{Alur} introduces an augmented alphabet, that is, an alphabet $\Gamma$ augmented by product with $\{\text{call}, \text{ret}, \text{int}\}$.  Symbols augmented with \textit{call} denote a change of scope by invocation of a new function, those augmented with \textit{ret} denote a return to the previous scope and those augmented with \textit{int} represent instructions that perform no change on the scope.

This is integrated with the notion of recursive automata; automata for which certain states trigger \textit{instantiation} of other automata.  This is how function calls are modelled, and how the context-free properties resulting from talking about nested function calls are dealt with.

One shortcoming of CARET is its inability to model time.  For example, one can talk about nested function calls, but one cannot place time constraints on them.  Further, CARET is restricted to function calls, whereas CFTL works with \textit{transitions}, which can be any kind of computation that happens in a program to move from one state to another.

\subsection{Freeze Quantification and Dynamic State Variables}\label{subsection-fq-dsv}

Freeze Quantification \cite{Alur1994, Henzinger}, informally, is a mechanism for binding to a variable the time at which a formula begins to hold.  It is the distinguishing feature of the Timed Propositional Temporal Logic introduced in \cite{Henzinger}.  In \cite{Henzinger}, a variable bound by Freeze Quantification is placed after a temporal operator to mean that this variable should take the value of the time at which the truth value of the subformula is being checked.  For example, one could express the property that ``$\phi$ holding should result in $\psi$ holding no more than 5 seconds later'' by

\begin{equation}\label{eq-example-freeze-quantification}
\square x. (\phi \implies \lozenge y.( \psi \land y - x \le 5)).
\end{equation}

This formula can be broken down as such:

\begin{itemize}
	\item $x$ is bound to the current time in which the subformula $(\phi \implies \lozenge y.( \psi \land y - x \le 5))$ is considered.
	\item It is then tested whether, when $\psi$ holds, it is also the case that $y - x \le 5$ \textit{at the moment that $\psi$ begins to hold}.
\end{itemize}

Alternatively, \cite{Alur1992} (a review paper, which also discusses Freeze Quantification) discusses the notion of Dynamic State Variables, also known as Clock Variables, with the notation being called \textit{explicit-clock} notation (based on the fact that it allows the explicit reference to time).

Clock variables work by quantifying formulas over a time domain, and binding values from the time domain to clock variables.  For example, taken from \cite{Alur1992}, one might express the property ``if $p$ holds, then $q$ should hold no more than 3 seconds later'' as such:

\begin{equation}\label{eq-example-clock-variables}
\forall x : \square ((p \land T = x) \implies (q \land T \le x + 3))
\end{equation}

where $T$ is tested for equality to $x$ (the current binding from the quantification) and, if $p$ is observed at time $x$, the time at which $q$ holds must be no more than 3 units of time greater than $x$.

CFTL uses neither Freeze Quantification, nor Dynamic State Variables; it has been found that neither are needed to describe the properties required.  For, consider again the property in Equation \ref{eq-example-property}.  Then one might express it with Freeze Quantifiers in the Timed Propositional Temporal Logic from \cite{Alur1994, Henzinger} as such:

\begin{equation}\label{eq-example-with-fq}
\square(x_K \implies \lozenge.t(f_{\text{start}} \land \circ.t'(f_{\text{end}} \land t' - t \in J)))
\end{equation}

with a slight abuse of notation to express the duration of the call to $f$ being in the interval $J$, and the same introduction of the atoms $x_K, f_{\text{start}}$ and $f_{\text{end}}$ as before.

Using clock variables, the property could be as such:

\begin{equation}\label{eq-example-with-cv}
\forall t : \square(x_K \implies \lozenge(f_{\text{start}} \land T = t \land \circ(f_{\text{end}} \land T - t \in J)))
\end{equation}

From Equations \ref{eq-example-with-fq} and \ref{eq-example-with-cv}, it is clear that expressing some time constraint properties is not intuitive, and one must again add a layer of abstraction to serve as an adapter to what CFTL otherwise deals with naturally.

\section{An Abstract, Static Model of Programs}\label{section-abstract-cfg}

In order to define both the semantics of CFTL, and to instrument for it, a structure is used that is inspired by the symbolic execution tree \cite{Baldoni, Kinga} found in the symbolic execution literature.  Symbolic execution was first described in \cite{Kinga} as a program verification approach; the basic idea is to input \textit{symbolic terms} into a program, allow the execution tree to expand (as different paths are taken, since they are all explored simultaneously) and perform satisfiability checking on the conditions seen so far on each path.  The work on verification by symbolic execution is not entirely useful to the work in this report, but a variant of the structure central to the theory (the symbolic execution tree) is.

Both \cite{Kinga} and \cite{Baldoni} describe the symbolic execution tree as having vertices for each statement in a program (and the resulting state after the statement is executed).  Edges correspond to control flow; if branching occurs, conditions are attached to the edges and, if no branching occurs (ie, in a basic block), edges just represent the ordering of statements.

The structure in Section \ref{section-static-model} considers a type of control flow graph augmented with a similar structure to the symbolic execution tree.  Vertices correspond to \textit{symbolic states} that are induced by statement execution and edges correspond to the statements that generate the states.  Structure taken from the symbolic execution tree includes conditions being placed on edges; when branching occurs between two symbolic states, the conditions are placed on the edges.

Additionally, since this work considers a mix between a control flow graph and a symbolic execution tree, the global graph structure is closest to the control flow graph in that convergence is possible.  In particular, the symbolic execution tree represents paths through control flow as distinct paths through the tree; this is not the case in this work.  For example, branching in a symbolic execution tree would result in two distinct paths with no possibility for convergence in the future; the structure considered in this report is more a representation of the code, rather than the executions it can generate.

Finally, this variant of the symbolic execution tree and control flow graph is statically computable, and so is used as a basis for the model of program runs given in Section \ref{section-program-run-model}, and is also used in the instrumentation method presented.

\subsection{Influence on Semantics}\label{subsection-influence-on-semantics}

As described in Section \ref{section-new-logic}, existing logics use semantics usually based on sequences of either 1) sets of propositions that are true in the instant to which the state corresponds \cite{Bauer, Thati2004, Koymans1990, DeGiacomo, Basin} (LTL, MTL, LDL and MDL) or 2) symbols from some alphabet of \textit{events} \cite{Bauer} (TLTL).  The only logic seen to share a notion of transitions similar to those considered in this report is RTL \cite{Jahanian} (Section \ref{subsection-rtl}).  Specifically, its notion of \textit{actions}, which are defined informally as the computation taking place between two \textit{events}, draws similarities with the transitions in the semantics defined later.

In a new approach, the preceding chapters discuss the combination of an alphabet of \textit{critical symbols} with the augmented static graph representation of the program under scrutiny.  This forms a state and transition-based model of a run of that program, over which the truth value of a formula can be defined.  This notion of critical symbols is also used in the static analysis-based approach to instrumentation discussed in Section \ref{subsection-instrumentation-planning}.

\subsection{Instrumentation Planning}\label{subsection-instrumentation-planning}

The work on design of logics examined in Section \ref{section-new-logic} does not typically address how instrumentation is performed; it starts by defining the grammar for the logic, gives the semantics, and sometimes finishes by giving a monitoring algorithm.

The basis of the instrumentation part of this work is the symbolic control flow graph (Section \ref{section-static-model}), and draws similarities with work done on determining a \textit{Minimal Sampling Period} (MSP)\cite{Bonakdarpour}.  That work attempts to use a static graph representation of a program (similar to that in this report) to optimally instrument for formulas in LTL.  This is a distinguishing feature of the work presented here: the instrumentation is tightly coupled with the logic and its semantics.

In particular, \cite{Bonakdarpour} presents algorithms (which must ultimately use heuristics, since the optimisation problem derived is NP-Complete) that attempt to determine the optimal amount of time to wait to take a sample of a variable $x$ from some critical set $U_\phi$ (ultimately the alphabet over which the LTL formulas they consider are written).  The graph they derive statically is a control flow graph on which some division is performed to be sure that vertices containing instructions that modify some $x \in U_\phi$ contain \textit{only} those instructions.  These vertices are then called \textit{critical vertices} and are used in the determination of the MSP.

A major source of contrast between \cite{Bonakdarpour} and the work given here is the inclusion of reachability analysis in the instrumentation strategy presented in Chapter \ref{chapter-instrumentation}.  In particular, the future-time operators (such as $\text{next}_{\Delta\tau}$ in Equation \ref{eq-example-property}) require traversal of the augmented control flow graph to decide where to instrument.  From this, some interesting problems arise when one considers the lack of a total order of symbolic states/transitions due to no runtime information being available.

\section{Monitor Synthesis}\label{section-monitor-synthesis}

An overview of constructing an optimised (in terms of size) monitoring mechanism is given in \cite{Wolper}.  The focus is on untimed Linear Temporal Logic (the verification of which being less involved than its timed counterpart TLTL) with infinite words, and takes the standard approach: a B\"{u}chi automaton is constructed (B\"{u}chi automata are defined over infinite words\footnote{Their acceptance condition allows \textit{cycling} of input symbols around the automaton, hence a finite state automaton can accept an infinite word.}) and then minimised (by considering isomorphism in parts of the automaton's structure).

In general, \cite{Wolper} uses the closure of a formula (the set of all subformulas, analogous to the power set) in construction of the automaton.  Specifically, letting $\pi$ be the sequence of states discussed in Section \ref{section-new-logic}, they define a \textit{closure labelling} $\sigma : \mathbb{N} \to 2^{\text{cl}(\psi)}$ where $\text{cl}(\psi)$ is the closure of $\psi$.  Considering this alongside the rule that, if $\phi \in \sigma(i)$, then $\pi(i) \models \phi$ where $\pi(i) = (\pi_i, \pi_{i+1}, \dots)$, they define a sequence of lemmata that ultimately allow relatively straightforward construction of a B\"{u}chi automaton for $\psi$.

As an example, consider $\psi = \square(\phi)$.  Then, if $\phi \in \sigma(j)$ for some $j$, $\phi \in \sigma(k)$ for every $k \ge j$ also.  This follows the rule described above: $\phi \models \tau(k)$ for every $k \ge j$ must be the case if $\phi \in \pi(j)$.

CFTL uses temporal operators, but the current instrumentation strategy removes the need for construction of complex monitors.  Because of this, the current strategy is simply a form of formula rewriting, which is accomplished by using formula trees and progressively collapsing them as data to resolve the atoms in the formula is observed.  The nature of the logic means that formula rewriting does not result in explosion of formula size, and the optimisation using the formula closure means formula collapse (which must happen at runtime) speeds up collapse considerably.

\newpage

\bibliography{references}{}
\bibliographystyle{unsrt}

\end{document}